\renewcommand{\theequation}{\thesection\arabic{equation}}
\newtheorem{theorem}{Theorem}
\newtheorem{lemma}{Lemma}
\newtheorem{corollary}{Corollary}
\newtheorem{proposition}{Proposition}
\theoremstyle{definition}
\newtheorem{definition}{Definition}
\newtheorem{example}{Example}
\def\blfootnote{\gdef\@thefnmark{}\@footnotetext}
\newcolumntype{C}[1]{>{\centering\let\newline\\\arraybackslash\hspace{0pt}}m{#1}}
\DeclareMathOperator*{\argmax}{\arg\!\max}
\renewcommand{\thefootnote}{\fnsymbol{footnote}}
\definecolor{ydcolor}{RGB}{79, 51, 255}
\begin{document}


\renewcommand{\baselinestretch}{2}

\markright{ \hbox{\footnotesize\rm 
}\hfill\\[-13pt]
\hbox{\footnotesize\rm
}\hfill }

\markboth{\hfill{\footnotesize\rm 
Sung et al.} \hfill}
{\hfill {\footnotesize\rm Functional-Input Gaussian Processes} \hfill}

\renewcommand{\thefootnote}{}

$\ $\par


\fontsize{12}{14pt plus.8pt minus .6pt}\selectfont \vspace{0.8pc}
\centerline{\large\bf Functional-Input Gaussian Processes}
\vspace{2pt} \centerline{\large\bf with Applications to Inverse Scattering Problems}
\vspace{.4cm} \centerline{Chih-Li Sung$^{1,*}$, Wenjia Wang\blfootnote{*These authors contributed equally to the manuscript.}$^{2,*}$, Fioralba Cakoni$^{3}$,  Isaac Harris$^{4}$, Ying Hung$^{3}$} \vspace{-0.1cm} {\center{\small\it
$^1$Michigan State University\\ $^2$The Hong Kong University of Science and Technology (Guangzhou)\\$^3$Rutgers, the State University of New Jersey,\quad $^4$Purdue University\\}}
\vspace{.2cm} \fontsize{9}{11.5pt plus.8pt minus
.6pt}\selectfont

\begin{quotation}
\noindent {\it Abstract:}
Surrogate modeling based on Gaussian processes (GPs) has received increasing attention in the analysis of complex problems in science and engineering. Despite extensive studies on GP modeling, the developments for functional inputs are scarce. Motivated by an inverse scattering problem in which functional inputs representing the support and material properties
of the scatterer are involved in the partial differential equations, a new class of kernel functions for functional inputs is introduced for GPs. Based on the proposed GP models, the asymptotic convergence properties of the resulting mean squared prediction errors are derived and the finite sample performance is demonstrated by numerical examples. In the application to inverse scattering, a surrogate model is constructed  with functional inputs, which is crucial to recover the reflective index of an inhomogeneous isotropic scattering region of interest for a given far-field pattern.

\vspace{9pt}
\noindent {\it Key words and phrases:}
Computer experiments, surrogate model, uncertainty quantification, scalar-on-function regression, functional data analysis
\end{quotation}\par

\def\thefigure{\arabic{figure}}
\def\thetable{\arabic{table}}

\renewcommand{\theequation}{\thesection.\arabic{equation}}
\numberwithin{equation}{section}

\fontsize{12}{14pt plus.8pt minus .6pt}\selectfont

\setcounter{section}{0} 
\setcounter{equation}{0} 

\lhead[\footnotesize\thepage\fancyplain{}\leftmark]{}\rhead[]{\fancyplain{}\rightmark\footnotesize\thepage}

\section{Introduction}

Computer experiments, the studies of real systems using mathematical models such as partial differential equations, have received increasing attention in science and engineering for the analysis of complex problems. Typically, computer experiments require a great deal of time and computing. Therefore, based on a finite sample of computer experiments, it is crucial to build a surrogate for the actual mathematical models and use the surrogate for prediction, inference, and optimization. The Gaussian process (GP) model, also called kriging, is a widely used surrogate model due to its flexibility, interpolating property, and the capability of uncertainty quantification through the predictive distribution. More discussions on computer experiments and surrogate modeling using GP models can be found in  \cite{santner2003design} and  \cite{gramacy2020surrogates}.

This paper is motivated by an inverse scattering problem in computer experiments, where the computer experiments involve functional inputs and therefore the analysis and inference rely on a surrogate model that can take functional inputs into account. 
Figure \ref{fig:inversescattering}  illustrates the idea of inverse scattering. Let the functional input $g$ represent the  material properties of an inhomogeneous isotropic scattering region of interest shown in the middle of Figure \ref{fig:inversescattering}. For a given functional input, the far-field pattern, $u^s$, is obtained by solving partial differential equations \citep{CCHbook} which is computationally intensive.
Given a new far-field pattern, the goal of inverse scattering is to recover the functional input using a surrogate model. Therefore, a crucial step to address this problem is to develop a surrogate model applicable to functional inputs. 
Beyond inverse scattering \citep{CCHbook,kaipio2019bayesian}, problems with functional inputs are frequently found in engineering applications of non-destructive testing, where measurements on the surface or exterior of an object is used to infer the interior structure. 
Similar problems also come up in electrical impedance tomography, where one wishes to recover the functional input representing the electric conductivity, from the measured current to voltage mapping; see, e.g., \cite{dbar2020}, for the electrical impedance tomography model. Another important application is the widely used computerized tomography in medical study for interior reconstruction \citep{courdurier2008solving,li2019learning}.

\begin{figure}[h]
    \centering
    \includegraphics[width=0.47\textwidth]{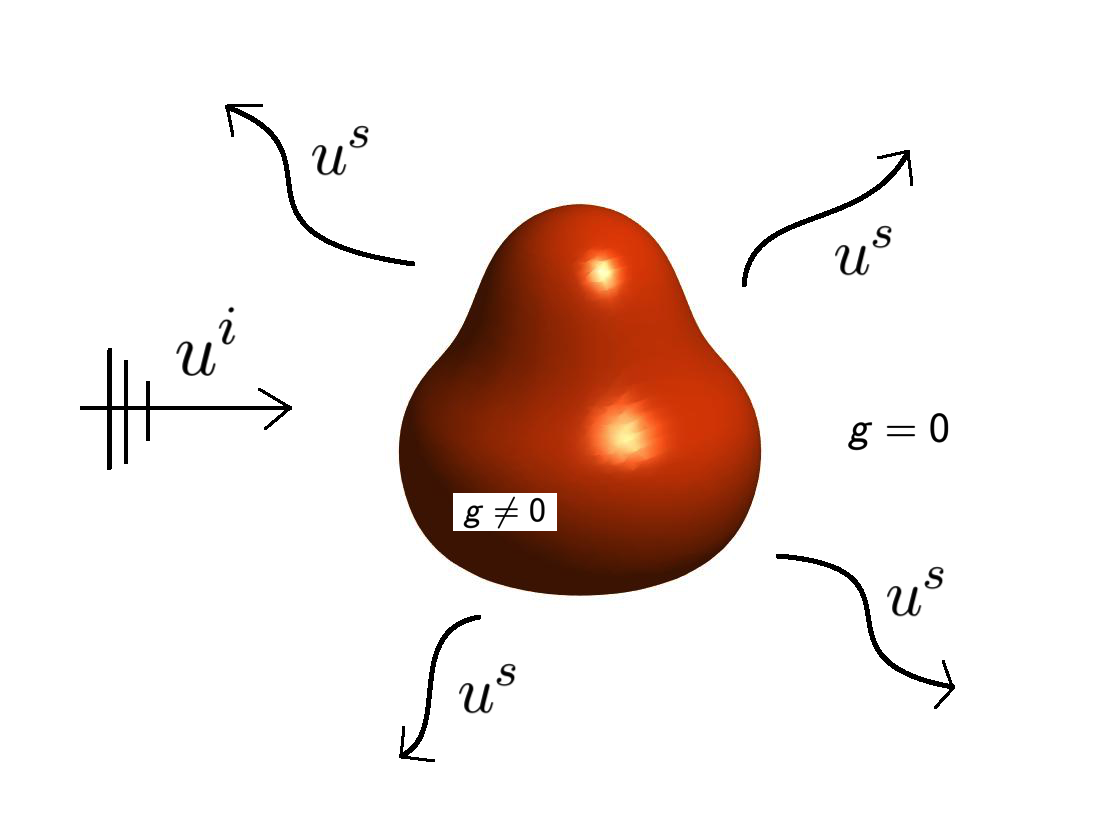}
    \caption{Illustration of the inverse scattering problem.}
    \label{fig:inversescattering}
\end{figure}

Despite extensive studies on surrogate modeling using GPs \citep{gramacy2020surrogates}, the developments for functional inputs are scarce. 
To the best of our knowledge, most of the existing development on GPs involving functional inputs are restrictive to specific applications. 
For example, \cite{nguyen2015gaussian} propose a functional-input GP with bilinear covariance operators and apply it to linear partial differential equations. \cite{morris2012gaussian} develops a kriging model with a covariance function specifically for time-series data. \cite{chen2019function} propose a spectral-distance correlation function and apply it to 3D printing. 

Functional data analysis has been extensively studied in the literature and the research involves functional inputs are often referred to as \textit{scalar-on-function regression} \citep{ramsay2005functional,kokoszka2017introduction, reiss2017methods}. Some approaches reduce the dimension of functional inputs by basis-expansion approximation and then perform a linear or nonlinear model in the reduced Euclidean space (see, e.g., \cite{cardot1999functional,ait2008cross,yao2010functional,muller2013continuously,mclean2014functional}). Another direction is to directly handle the functional inputs by spline approaches (see, e.g., \cite{ferraty2006nonparametric,preda2007regression,baillo2009local,shang2013bayesian}. However, most of these approaches do not incorporate Gaussian process assumptions that allow for uncertainty quantification in the construction of surrogate models.


The focus of this paper is to introduce a new class of Gaussian process (GP) surrogate models for functional inputs.
There are recent studies on surrogate modeling where GP is applied to functional inputs based on truncated basis expansion \citep{shi2008curve,tan2019gaussian,li2021gaussian}. Ideas along this line are intuitive and easy to implement; however, there are three  
drawbacks. First, basis expansion requires an explicit specification of basis functions. Second, basis expansion approximates the functional input and achieves dimension reduction by a finite truncation of the basis functions, which can introduce additional bias to the model. Third, scaling up the techniques developed by basis expansion to high dimensional functional inputs is challenging due to the curse of dimensionality.

To tackle these problems with functional inputs, a new GP surrogate is proposed by introducing a new class of kernel functions that are directly defined on a functional space. It is shown that the proposed kernels are closely connected to the idea of basis expansion without the need of specifying individual basis and without the loss of efficiency due to finite truncation. The procedure is general and provides a parsimonious model especially for high dimensional problems, in which cases basis-expansion approaches often require a great amount of basis functions for high quality approximation. Numerical comparisons with the idea of basis expansion for functional inputs are  conducted in the simulation studies as well as in the application of inverse scattering problem. The 
empirical results of the proposed surrogate model appear to outperform those based on basis expansion in terms of prediction accuracy and uncertainty quantification.

Although the proposed surrogate models extend the conventional GPs to functional inputs, the theoretical results, including the convergence rates of the mean squared prediction errors (MSPE) and the connections to experimental design, are nontrivial extensions. Directly defining the kernels on a functional space reduces the model bias as compared to basis expansion, but posts technical challenges to the theoretical derivations. Additional scattered data approximation techniques, such as the local polynomial reproduction \citep{wendland2004scattered}, have to be rigorously applied to the study of convergence rates. The convergence rates are further explored by the notion of fill distances, which provides a concrete connection between the performance of the proposed model and the experimental design in a functional space.

The remainder of the paper is organized as follows. In Section 2, a functional-input GP model is introduced. A new class of kernel functions, including a linear and a nonlinear kernel, and their theoretical properties are discussed in Section 3. Numerical analysis is conducted in Section 4 to examine the prediction accuracy of the proposed models. In Section 5, the proposed framework is applied to construct a surrogate model for an inverse scattering problem. Concluding
remarks are given in Section 6. Detailed theoretical proofs, and the data and \textsf{R} code for reproducing the numerical results, are provided in Supplementary Materials.


\section{Functional-Input Gaussian Process}
 

Suppose that $V$ is a functional space consisting of functions defined on a compact and convex region $\Omega\subseteq\mathbb{R}^d$, and all functions $g\in V$ are continuous on $\Omega$, i.e., $V\subset C(\Omega)$. A functional-input GP, $f:V\rightarrow \mathbb{R}$, is denoted by
\begin{align}\label{eq:figp}
    f(g)\sim\mathcal{FIGP}(\mu,K(g,g')),
\end{align}
where $\mu$ is an unknown mean and $K(g,g')$ is a semi-positive kernel function for $g,g'\in V$. A new class of kernel $K(g,g')$ for functional inputs is discussed in Section 3.


Given a properly defined kernel function, the estimation and prediction procedures are similar to the conventional GP. 
Assume that there are $n$ realizations from the functional-input GP, where $g_1,\ldots,g_n$ are the inputs and  $f(g_1),\ldots,f(g_n)$ are the outputs. We have $f(g_1),\ldots,f(g_n)$ following a multivariate normal distribution, $\mathcal{N}_n(\boldsymbol{\mu}_n,\mathbf{K}_n)$, with mean $\boldsymbol{\mu}_n=\mu\boldsymbol{1}_n$ and covariance $\mathbf{K}_n$, where $\boldsymbol{1}_n$ is a size-$n$ all-ones vector and $(\mathbf{K}_n)_{j,k}=K(g_j,g_k)$. 
The unknown parameters, including $\mu$ and the hyperparameters  associated with the kernel function, can be estimated by likelihood-based approaches or Bayesian approaches. We refer the details of the estimation methods to \cite{santner2003design} and \cite{gramacy2020surrogates}.

Suppose $g\in V$ is an untried new function. By the property of the conditional multivariate normal distribution, the corresponding output $f(g)$ follows a normal distribution with the mean and variance,
\begin{align}
\mathbb{E}[f(g)|\mathbf{y}_n]= &\mu+\mathbf{k}_n(g)^T \mathbf{K}^{-1}_n (\mathbf{y}_n-\boldsymbol{\mu}_n),\quad \text{and}\label{mean}\\
\mathbb{V}[f(g)|\mathbf{y}_n]= & K(g,g)-\mathbf{k}_n(g)^T \mathbf{K}^{-1}_n \mathbf{k}_n(g),\label{var}
\end{align}
where $\mathbf{y}_n=(y_1,...,y_n)^T$, $y_i=f(g_i)$, and $\mathbf{k}_n(g)=(K(g,g_1),...,K(g,g_n))^T$. The conditional mean of \eqref{mean} is used to predict $f(g)$, and the conditional variance of \eqref{var} can be used to quantify the prediction uncertainty. 


\section{A New Class of Kernel Functions}

In this section, a new class of kernel functions for the functional-input GP is introduced. Based on the proposed models, the asymptotic convergence properties of the resulting mean squared prediction errors are derived. Section \ref{sec:linear} focuses on the discussions of a linear kernel and  Section \ref{sec:nonlinear} extends the discussions to a nonlinear kernel. A practical guidance on the selection of optimal kernel is discussed in Section \ref{sec:selectionkernel}. For notational simplicity, the mean in (\ref{eq:figp}) is assumed to be zero in this section but the results can be easily extended to non-zero cases.

\subsection{Linear kernel for functional inputs}\label{sec:linear}



We first introduce a \textit{linear kernel} for functional inputs $g_1$ and $g_2$:
\begin{align}\label{eq:lk}
    K(g_1,g_2)=\int_{\Omega}\int_{\Omega} g_1(\mathbf{x})g_2(\mathbf{x}')\Psi(\mathbf{x},\mathbf{x}'){\rm{d}}\mathbf{x}{\rm{d}}\mathbf{x}',
\end{align}
where $g_1,g_2\in V$ and $\Psi$ is a positive definite function defined on $\Omega\times \Omega$. It can be shown in the following proposition that this kernel function is semi-positive definite.

\begin{proposition}\label{Prop1}
The linear kernel $K$ defined in (\ref{eq:lk}) is semi-positive definite on $V\times V$.
\end{proposition}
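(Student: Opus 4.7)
The plan is to verify the definition of semi-positive definiteness directly: for any finite collection $g_1,\ldots,g_n\in V$ and scalars $c_1,\ldots,c_n\in\mathbb{R}$, show that $\sum_{i,j=1}^n c_i c_j K(g_i,g_j)\ge 0$.

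First I would substitute the kernel formula and use linearity of the double integral to pull the finite sums inside. Setting $h(\mathbf{x}):=\sum_{i=1}^n c_i g_i(\mathbf{x})$, which is continuous on $\Omega$ because each $g_i$ is, the quadratic form collapses to
\[
\sum_{i,j=1}^n c_i c_j K(g_i,g_j)=\int_{\Omega}\int_{\Omega}h(\mathbf{x})\,h(\mathbf{x}')\,\Psi(\mathbf{x},\mathbf{x}')\,\mathrm{d}\mathbf{x}\,\mathrm{d}\mathbf{x}'.
\]
So the problem reduces to showing that this ``energy integral'' is non-negative for every continuous $h$ on the compact set $\Omega$.

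Next I would invoke the positive definiteness of $\Psi$. By definition, for any finite set of points $\mathbf{x}_1,\ldots,\mathbf{x}_N\in\Omega$ and weights $a_1,\ldots,a_N$, one has $\sum_{k,l}a_k a_l \Psi(\mathbf{x}_k,\mathbf{x}_l)\ge 0$. I would approximate the double integral by Riemann sums: partition $\Omega$ into cells $\Omega_k$ of diameter tending to zero with representative points $\mathbf{x}_k$ and volumes $|\Omega_k|$, and take weights $a_k=h(\mathbf{x}_k)|\Omega_k|$. Each Riemann sum $\sum_{k,l}h(\mathbf{x}_k)h(\mathbf{x}_l)\Psi(\mathbf{x}_k,\mathbf{x}_l)|\Omega_k||\Omega_l|$ is non-negative by positive definiteness of $\Psi$, and since $h$ is continuous on the compact $\Omega$ (hence bounded and uniformly continuous) and $\Psi$ is continuous, the Riemann sums converge to the integral. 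Passing to the limit preserves the inequality, yielding the desired non-negativity.

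I do not expect any serious obstacle here; the only subtle point is the passage from the pointwise definition of positive definiteness for $\Psi$ to the integral version, which is handled cleanly by the Riemann-sum approximation above (alternatively, by Bochner's theorem one can write $\Psi(\mathbf{x},\mathbf{x}')=\int e^{\mathrm{i}\boldsymbol{\omega}\cdot(\mathbf{x}-\mathbf{x}')}\mathrm{d}\nu(\boldsymbol{\omega})$ for a finite positive measure $\nu$ when $\Psi$ is translation-invariant, which makes the integral manifestly $\int |\hat h(\boldsymbol{\omega})|^2 \mathrm{d}\nu(\boldsymbol{\omega})\ge 0$). Either approach finishes the proof.
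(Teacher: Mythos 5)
Your proposal is correct, and it follows the same overall reduction as the paper: pull the sums inside the double integral so that the quadratic form becomes $\int_\Omega\int_\Omega h(\mathbf{x})h(\mathbf{x}')\Psi(\mathbf{x},\mathbf{x}')\,{\rm d}\mathbf{x}\,{\rm d}\mathbf{x}'$ with $h=\sum_i c_i g_i$. Where you differ is in how the non-negativity of this energy integral is justified, and here your version is actually the more careful one. The paper's proof writes the integrand as $\bigl(\sum_j\alpha_j g_j(\mathbf{x})\bigr)^2\Psi(\mathbf{x},\mathbf{x}')$ --- which is a slip, since the product is $h(\mathbf{x})h(\mathbf{x}')$, not a square of a single variable --- and then asserts $\geq 0$ without further argument; as written, that step would require $\Psi$ to be pointwise non-negative, which is not assumed (only positive definiteness is). Your Riemann-sum argument (or the Bochner-theorem alternative) is precisely the missing bridge from the pointwise positive definiteness of $\Psi$ to the integral inequality, using continuity of $h$ and $\Psi$ on the compact set $\Omega$ to pass to the limit. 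So your write-up supplies a justification that the paper's proof implicitly relies on but does not spell out; the paper's version is shorter but, taken literally, incomplete at exactly that point.
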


By Mercer's theorem \citep{rahman2007integral}, we have
\begin{align}\label{eq:mercer1}
\Psi(\mathbf{x},\mathbf{x}') =\sum_{j=1}^\infty \lambda_j \phi_j(\mathbf{x})\phi_j(\mathbf{x}'),
\end{align}
where $\mathbf{x},\mathbf{x}'\in \Omega$, and
$\lambda_1\geq \lambda_2\geq...> 0$ and 
$\{\phi_k\}_{k\in\mathbb{N}}$ are the eigenvalues and the orthonormal basis in $L_2(\Omega)$, respectively. Given the positive definite function $\Psi$, we can construct a GP via the  Karhunen--Lo\`eve expansion:
\begin{align}\label{eq:lifigp}
    f(g) = \sum_{j=1}^\infty \sqrt{\lambda_j}\langle \phi_j,g \rangle_{L_2(\Omega)}Z_j,
\end{align}
where $Z_j$'s are independent standard normal random variables, and $\langle\phi_j,g \rangle_{L_2(\Omega)}$ is the inner product of $\phi_j$ and $g$, which is $\langle\phi_j,g \rangle_{L_2(\Omega)}=\int_\Omega \phi_j(\mathbf{x})g(\mathbf{x}){\rm{d}}\mathbf{x}$.
It can be shown that the covariance function of the constructed GP in (\ref{eq:lifigp}) is  $K(g_1,g_2)$ defined in (\ref{eq:lk}), i.e., 
\begin{align}\label{eq:covli}
    {\rm Cov}(f(g_1),f(g_2)) = & \sum_{j=1}^\infty \lambda_j\langle \phi_j,g_1 \rangle_{L_2(\Omega)}\langle \phi_j,g_2 \rangle_{L_2(\Omega)}\nonumber\\
    =&  \sum_{j=1}^\infty \lambda_j \int_\Omega\int_\Omega g_1(\mathbf{x})\phi_j(\mathbf{x})g_2(\mathbf{x}')\phi_j(\mathbf{x}'){\rm{d}}\mathbf{x}{\rm{d}}\mathbf{x}'\nonumber\\
    =&  \int_\Omega\int_\Omega g_1(\mathbf{x})g_2(\mathbf{x}')\Psi(\mathbf{x},\mathbf{x}'){\rm{d}}\mathbf{x}{\rm{d}}\mathbf{x}'
\end{align}
for any $g_1,g_2\in V$. 

The proposed surrogate model is equivalent to a basis expansion through the Karhunen--Lo\`eve expansion in \eqref{eq:lifigp}, but it is worth noting that the proposed method only requires  the specification of kernel function in  \eqref{eq:lk} instead of an explicit specification of individual basis $\phi_j$. Furthermore, there is no dimension reduction or approximation applied to the functional input, and thus there is no additional bias introduced to the surrogate.  
More specifically, the proposed model preserves the most information  without finite truncation of basis expansion because the kernel representation \eqref{eq:lk} is  equivalent to representing each input $g$ as an element in $L_2(\Omega)$ through a basis expansion with respect to $\{\phi_j\}^{\infty}_{j=1}$. 
These advantages as compared to basis expansion are commonly seen in kernel-based methods, such as support-vector machines (SVM), kernel principal components analysis (KPCA), and kernel ridge regression (KRR) \citep{friedman2017elements}. 

\begin{proposition}\label{Proplinear}
The Gaussian process, $f(g)$, constructed as in \eqref{eq:lifigp} is linear, i.e., for any $a,b\in \mathbb{R}$ and $g_1,g_2\in V$, it follows that $f(ag_1+bg_2)=af(g_1)+bf(g_2)$.
\end{proposition}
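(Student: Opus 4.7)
The approach is to exploit the linearity of the $L_2(\Omega)$ inner product $g\mapsto \langle \phi_j,g\rangle_{L_2(\Omega)}$ and then push linearity through the Karhunen--Lo\`eve expansion in (\ref{eq:lifigp}) term by term. For each $j$, the inner product is a bounded linear functional of $g$, so that
\[
\langle \phi_j, ag_1+bg_2\rangle_{L_2(\Omega)} = a\langle \phi_j, g_1\rangle_{L_2(\Omega)} + b\langle \phi_j, g_2\rangle_{L_2(\Omega)}.
\]
Substituting this identity into the series defining $f(ag_1+bg_2)$ and separating the two resulting sums should immediately give $af(g_1)+bf(g_2)$.

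The main obstacle, and where I would spend essentially all of the work, is justifying that the infinite series converges so that this formal rearrangement is meaningful. Concretely, I would show that for every $g\in V$ the series $\sum_{j=1}^\infty \sqrt{\lambda_j}\langle \phi_j,g\rangle_{L_2(\Omega)} Z_j$ converges almost surely. Because $\{Z_j\}$ is an i.i.d.\ standard normal sequence, Kolmogorov's two-series criterion reduces this to the deterministic condition $\sum_{j=1}^\infty \lambda_j \langle \phi_j,g\rangle_{L_2(\Omega)}^2 < \infty$. Since $\Omega$ is compact and elements of $V$ are continuous, we have $V\subset L_2(\Omega)$, so Parseval's identity for the orthonormal basis $\{\phi_j\}$ gives $\sum_j \langle \phi_j,g\rangle_{L_2(\Omega)}^2 = \|g\|_{L_2(\Omega)}^2<\infty$. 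Combined with the bound $\lambda_j\le \lambda_1<\infty$ from Mercer's theorem, this yields $\sum_j \lambda_j\langle\phi_j,g\rangle_{L_2(\Omega)}^2 \le \lambda_1 \|g\|_{L_2(\Omega)}^2<\infty$, as required.

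Finally, I would apply this convergence result to each of $g_1$, $g_2$, and $ag_1+bg_2$ simultaneously: the intersection of three almost-sure events is itself almost sure, so on a single full-measure event $\Omega_0$ in the underlying probability space, all three series converge. On $\Omega_0$ the partial-sum identity
\[
\sum_{j=1}^N \sqrt{\lambda_j}\langle \phi_j,ag_1+bg_2\rangle Z_j = a\sum_{j=1}^N \sqrt{\lambda_j}\langle \phi_j,g_1\rangle Z_j + b\sum_{j=1}^N \sqrt{\lambda_j}\langle \phi_j,g_2\rangle Z_j
\]
holds for every $N$, and passing to the limit $N\to\infty$ yields $f(ag_1+bg_2)=af(g_1)+bf(g_2)$ pathwise, which gives the claimed linearity of the Gaussian process.
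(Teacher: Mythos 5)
Your proposal is correct and follows essentially the same route as the paper's proof: both push the linearity of $g\mapsto\langle\phi_j,g\rangle_{L_2(\Omega)}$ through the Karhunen--Lo\`eve series term by term. The only difference is that you additionally justify the almost sure convergence of the series (via summability of $\lambda_j\langle\phi_j,g\rangle_{L_2(\Omega)}^2$ and Kolmogorov's criterion) before rearranging, a step the paper's proof takes for granted; this is a worthwhile refinement but not a different argument.
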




The proposed kernel function has an intuitive interpretation that connects to Bayesian modeling. In a Bayesian linear regression, the conditional mean function is assumed to be $f(\mathbf{x})=\mathbf{x}^T\mathbf{w}$, where $\mathbf{w}$ is typically assumed to have a multivariate normal prior, i.e., $\mathbf{w}\sim \mathcal{N}(0, \Sigma_\mathbf{w})$. Hence, for any two points, $\mathbf{x}$ and $\mathbf{x}'$, the covariance of $f(\mathbf{x})$ and $f(\mathbf{x}')$ is  $\text{Cov}(f(\mathbf{x}'),f(\mathbf{x}'))=\mathbf{x}^T\Sigma_\mathbf{w} \mathbf{x}'$, which can be interpreted as a \textit{weighted} inner product of $\mathbf{x}$ and $\mathbf{x}'$. The proposed model \eqref{eq:lifigp}  can be viewed as an analogy to the Bayesian linear model with functional inputs and the covariance \eqref{eq:covli} can be viewed as a weighted inner product of the two functions $g_1$ and $g_2$.


To understand the performance of the proposed predictor of \eqref{mean} with the kernel function defined in \eqref{eq:lk}, we first characterize the mean squared prediction error in the following theorem. Denote the reproducing kernel Hilbert space (RKHS) associated with a kernel $\Psi$ as $\mathcal{N}_{\Psi}(\Omega)$.

\begin{theorem}\label{thm:linear}
Let $\hat{f}(g)=\mathbb{E}[f(g)|\mathbf{y}_n]$ as in \eqref{mean}. For any continuous function $g\in V\subset L_2(\Omega)$, define a linear operator on $L_2(\Omega)$:
\begin{align*}
    \mathcal{T}g(\mathbf{x}) = \int_\Omega g(\mathbf{x}')\Psi(\mathbf{x},\mathbf{x}'){\rm{d}}\mathbf{x}'.
\end{align*}
The mean squared prediction error (MSPE) of $\hat{f}(g)$ can be written as 
\begin{align}\label{eq:thmlieq}
    \mathbb{E}\left(f(g)-\hat f(g)\right)^2
    = \min_{(u_1,...,u_n)\in \mathbb{R}^n} \left\|\mathcal{T}g - \sum_{j=1}^n u_j \mathcal{T}g_j\right\|_{\mathcal{N}_\Psi(\Omega)}^2,
\end{align}
where $\left\|\cdot\right\|_{\mathcal{N}_\Psi(\Omega)}$ is the RKHS norm of $\mathcal{N}_\Psi(\Omega)$.
\end{theorem}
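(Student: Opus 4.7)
The plan is to reduce the MSPE to a Hilbert-space best-approximation problem in two stages: first in the Gaussian Hilbert space of the process, then transporting that minimization into the RKHS $\mathcal{N}_\Psi(\Omega)$ via the Karhunen--Lo\`eve expansion in \eqref{eq:lifigp}.

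First, because the mean is zero and the conditional mean of a joint Gaussian is the best linear predictor, I would invoke the standard identity
\begin{equation*}
\mathbb{E}\bigl(f(g)-\hat f(g)\bigr)^2
= K(g,g)-\mathbf{k}(g)^T\mathbf{K}_n^{-1}\mathbf{k}(g)
= \min_{u\in\mathbb{R}^n}\mathbb{E}\Bigl(f(g)-\sum_{j=1}^n u_j f(g_j)\Bigr)^2,
\end{equation*}
which follows by expanding the quadratic $\mathbb{E}(f(g)-u^T\mathbf{f}_n)^2=K(g,g)-2u^T\mathbf{k}(g)+u^T\mathbf{K}_n u$ and solving the first-order condition $u=\mathbf{K}_n^{-1}\mathbf{k}(g)$. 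This step is standard and mechanical.

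Next I would substitute \eqref{eq:lifigp}, use the linearity of $f$ from Proposition \ref{Proplinear}, and use the mutual independence and unit variance of the $Z_j$'s to get
\begin{equation*}
\mathbb{E}\Bigl(f(g)-\sum_{j=1}^n u_j f(g_j)\Bigr)^2
= \sum_{k=1}^\infty \lambda_k\Bigl\langle \phi_k,\, g-\sum_{j=1}^n u_j g_j\Bigr\rangle_{L_2(\Omega)}^2 .
\end{equation*}
Thus the MSPE equals the minimum over $u$ of the above series. The remaining task is to identify this series with the RKHS norm $\|\mathcal{T}g-\sum_j u_j\mathcal{T}g_j\|_{\mathcal{N}_\Psi(\Omega)}^2$.

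For the identification, I would use Mercer's expansion \eqref{eq:mercer1} to compute
\begin{equation*}
\mathcal{T}g(\mathbf{x})=\int_\Omega g(\mathbf{x}')\sum_{k=1}^\infty \lambda_k\phi_k(\mathbf{x})\phi_k(\mathbf{x}')\,\mathrm{d}\mathbf{x}' = \sum_{k=1}^\infty \lambda_k\langle\phi_k,g\rangle_{L_2(\Omega)}\phi_k(\mathbf{x}),
\end{equation*}
so that $\mathcal{T}g$ is expanded in the eigenbasis with coefficients $a_k:=\lambda_k\langle\phi_k,g\rangle_{L_2(\Omega)}$. Recalling the characterization of the RKHS of a Mercer kernel, $\|h\|_{\mathcal{N}_\Psi(\Omega)}^2=\sum_k a_k^2/\lambda_k$ whenever $h=\sum_k a_k\phi_k$ with $\sum_k a_k^2/\lambda_k<\infty$, I obtain
\begin{equation*}
\|\mathcal{T}g\|_{\mathcal{N}_\Psi(\Omega)}^2=\sum_{k=1}^\infty \frac{(\lambda_k\langle\phi_k,g\rangle_{L_2(\Omega)})^2}{\lambda_k}=\sum_{k=1}^\infty \lambda_k\langle\phi_k,g\rangle_{L_2(\Omega)}^2.
\end{equation*}
Applying $\mathcal{T}$ to the (finite) linear combination $g-\sum_j u_j g_j$ and using its linearity yields exactly the series of the previous paragraph, which completes the identification and hence the theorem.

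The main obstacle I anticipate is purely technical: verifying that $\mathcal{T}g\in\mathcal{N}_\Psi(\Omega)$ for every continuous $g\in V\subseteq L_2(\Omega)$, so that the RKHS norm on the right-hand side of \eqref{eq:thmlieq} is finite and the manipulations above are justified. This amounts to showing $\sum_k \lambda_k\langle\phi_k,g\rangle_{L_2(\Omega)}^2<\infty$, which is immediate since it equals $K(g,g)=\int\!\!\int g(\mathbf{x})g(\mathbf{x}')\Psi(\mathbf{x},\mathbf{x}')\,\mathrm{d}\mathbf{x}\,\mathrm{d}\mathbf{x}'$ and $\Psi$ is continuous on the compact set $\Omega\times\Omega$; some care is needed to justify swapping the integral with the Mercer series, but this is standard under absolute/uniform convergence guaranteed by Mercer's theorem.
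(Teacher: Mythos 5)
Your proof is correct, and it reaches the same two-stage structure as the paper's argument: both first reduce the MSPE to $\min_{\mathbf{u}}\mathbb{E}(f(g)-\sum_j u_j f(g_j))^2$ via the best-linear-predictor identity, and then identify that quadratic form with the RKHS norm of $\mathcal{T}(g-\sum_j u_j g_j)$. Where you diverge is in the identification step. The paper expands each term $K(\cdot,\cdot)$ as a double integral, rewrites it as $\langle g,\mathcal{T}g\rangle_{L_2(\Omega)}$ etc., and invokes the adjoint property of the integral operator (Proposition 10.28 of Wendland, stated as Lemma S2.1) to convert $L_2$ inner products into $\mathcal{N}_\Psi$ inner products; this uses only the covariance structure of the process. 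You instead go through the Karhunen--Lo\`eve representation \eqref{eq:lifigp} together with linearity of $f$, compute the variance of $\sum_k\sqrt{\lambda_k}\langle\phi_k,g-\sum_j u_j g_j\rangle Z_k$ directly, and then match it to the RKHS norm via the spectral characterization $\|h\|_{\mathcal{N}_\Psi(\Omega)}^2=\sum_k a_k^2/\lambda_k$ (Theorem 10.29 of Wendland). The two routes rest on neighboring standard facts, and in fact your identity is exactly the one the paper derives later as \eqref{eq:idofTmap} in the proof of Corollary \ref{coro:linear1}, so your version effectively front-loads that computation into the theorem itself. What the paper's route buys is independence from the particular KL construction (it works for any centered Gaussian process with covariance \eqref{eq:lk}) and avoids any convergence bookkeeping for the random series; what your route buys is a more explicit, coordinate-wise formula that feeds directly into the eigenvalue-decay arguments of the corollaries. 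Your closing observation that $\mathcal{T}g\in\mathcal{N}_\Psi(\Omega)$ because $\sum_k\lambda_k\langle\phi_k,g\rangle_{L_2(\Omega)}^2=K(g,g)<\infty$ is a clean substitute for the paper's citation of Proposition 10.28, so no gap remains.
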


By Proposition 10.28 in \cite{wendland2004scattered}, it is can be shown that $\mathcal{T}g\in \mathcal{N}_\Psi(\Omega)$ and therefore the right-hand side of \eqref{eq:thmlieq} exists. The theorem provides a new representation of the MSPE for functional-input GPs that is analogue to that of conventional GPs in the $L_2$ input space, which has not been yet explored in the existing literature.
According to Theorem \ref{thm:linear}, the MSPE can be represented as the distance between $\mathcal{T}g$ and its projection on the linear space spanned by $\{\mathcal{T}g_1,...,\mathcal{T}g_n\}$. This distance can be reduced if $g_j$'s can be designed so that  the spanned space can well approximate the space $V$.
We highlight some designs of $g_j$'s in the following two corollaries where the convergence rates of MSPE can be explicitly discussed.

In the following corollaries, the kernel function $\Psi$ is assumed to be a Mat\'ern kernel \citep{stein2012interpolation}: 
\begin{align}
\Psi(\mathbf{x},\mathbf{x}')=\psi(\|\Theta(\mathbf{x}-\mathbf{x}')\|_2)\quad\text{with}\label{matern2}\\
\psi(r)=
\frac{\sigma^2}{\Gamma(\nu)2^{\nu-1}}(2\sqrt{\nu} r)^\nu B_\nu(2\sqrt{\nu}r),\label{matern1}
\end{align}
where $\Theta$ is a lengthscale parameter, which is a $d\times d$ positive diagonal matrix, $\|\cdot\|_2$ denotes the Euclidean norm, $\sigma^2$ is a positive scalar, $B_\nu$ is the modified Bessel function of the second kind, and $\nu$ represents a smoothness parameter.
The Mat\'ern kernel is considered here because it has been widely used in the computer experiments and spatial statistics literature \citep{santner2013design,stein2012interpolation}. The corollaries can be also extended to a general positive kernel which has $k$ continuous derivatives, such as Wendland’s compactly supported kernel \citep{wendland2004scattered}. We refer the details of this extension to \cite{wendland2004scattered} and \cite{haaland2011accurate}.
Without loss of generality, we assume that $\Theta$ is an identity matrix and $\sigma^2=1$ for the theoretical developments in this section. More detailed discussions of these parameters, including $\Theta,\sigma^2$ and $\nu$, are given in Section \ref{sec:numericstudy}.
 

\begin{corollary}\label{coro:linear1} Suppose $g_j$, $j=1,\ldots,n$ are the first $n$ eigenfunctions of $\Psi$, i.e, $g_j=\phi_j$. 
For $g\in V\subset L_2(\Omega)$, there exists a constant $ C_1>0$ such that
\begin{align}\label{eq:l1c11}
    \mathbb{E}\left(f(g)-\hat f(g)\right)^2\leq C_1 \|g\|_{L_2(\Omega)}^2n^{-\frac{2\nu}{d}}.
\end{align}
Furthermore, if $g\in \mathcal{N}_{\Psi}(\Omega)$, then there exists a constant $C_2>0$ such that
\begin{align}\label{eq:l1c12}
    \mathbb{E}\left(f(g)-\hat f(g)\right)^2\leq C_2 \|g\|_{\mathcal{N}_{\Psi}(\Omega)}^2n^{-\frac{4\nu}{d}}.
\end{align}
\end{corollary}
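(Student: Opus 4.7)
The plan is to leverage Theorem \ref{thm:linear} together with the fact that the choice $g_j = \phi_j$ diagonalizes the integral operator $\mathcal{T}$, which reduces the RKHS approximation problem in \eqref{eq:thmlieq} to explicit bookkeeping with eigenvalue tails.

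First I would compute, using Mercer's expansion \eqref{eq:mercer1} and orthonormality of $\{\phi_k\}$ in $L_2(\Omega)$, that $\mathcal{T}\phi_j = \lambda_j \phi_j$. Writing any continuous $g \in V \subset L_2(\Omega)$ as $g = \sum_{j} a_j \phi_j$ with $a_j = \langle g, \phi_j\rangle_{L_2(\Omega)}$ then yields $\mathcal{T} g = \sum_j a_j \lambda_j \phi_j$. From the Mercer representation of the RKHS one reads off $\bigl\|\sum_j b_j \phi_j\bigr\|_{\mathcal{N}_\Psi(\Omega)}^2 = \sum_j b_j^2/\lambda_j$. Substituting these into the right-hand side of \eqref{eq:thmlieq} and choosing $u_j = a_j$ for $j \leq n$ annihilates the first $n$ Fourier modes, leaving
\[
\mathbb{E}\bigl(f(g) - \hat f(g)\bigr)^2 \;=\; \sum_{j=n+1}^{\infty} a_j^2 \lambda_j.
\]

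Second I would invoke the standard eigenvalue asymptotics for the Mat\'ern integral operator on a bounded domain in $\mathbb{R}^d$, $\lambda_j \asymp j^{-(2\nu+d)/d}$, which is a Weyl-type result obtained via the identification $\mathcal{N}_\Psi(\Omega) \cong H^{\nu+d/2}(\Omega)$. For $g \in L_2(\Omega)$, bound the tail by factoring out the largest eigenvalue,
\[
\sum_{j>n} a_j^2 \lambda_j \;\leq\; \lambda_{n+1} \sum_{j>n} a_j^2 \;\leq\; \lambda_{n+1}\, \|g\|_{L_2(\Omega)}^2 \;\lesssim\; n^{-(2\nu+d)/d}\|g\|_{L_2(\Omega)}^2,
\]
which is in particular bounded by $C_1 \|g\|_{L_2(\Omega)}^2 n^{-2\nu/d}$, giving \eqref{eq:l1c11}. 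For $g \in \mathcal{N}_\Psi(\Omega)$, rewrite $a_j^2 \lambda_j = (a_j^2/\lambda_j)\,\lambda_j^2$ and pull out $\lambda_{n+1}^2$:
\[
\sum_{j>n} a_j^2 \lambda_j \;\leq\; \lambda_{n+1}^2 \sum_{j>n} \frac{a_j^2}{\lambda_j} \;\leq\; \lambda_{n+1}^2\, \|g\|_{\mathcal{N}_\Psi(\Omega)}^2 \;\lesssim\; n^{-2(2\nu+d)/d}\|g\|_{\mathcal{N}_\Psi(\Omega)}^2,
\]
which is bounded by $C_2 \|g\|_{\mathcal{N}_\Psi(\Omega)}^2 n^{-4\nu/d}$, proving \eqref{eq:l1c12}.

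The only non-routine ingredient is the eigenvalue decay rate $\lambda_j \asymp j^{-(2\nu+d)/d}$. This is the main obstacle in the sense that it is not self-contained: one must either cite classical spectral asymptotics for elliptic pseudo-differential operators, or establish the rate via the Sobolev embedding $H^{\nu+d/2}(\Omega) \hookrightarrow L_2(\Omega)$ and its known approximation numbers, taking some care with the boundary regularity of the compact convex domain $\Omega$. Once the decay rate is in hand, the remainder is pure bookkeeping on Mercer coefficients, and the stated $n^{-2\nu/d}$ and $n^{-4\nu/d}$ rates follow immediately (in fact with room to spare).
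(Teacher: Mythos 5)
Your proof is correct and follows essentially the same route as the paper's: reduce via Theorem \ref{thm:linear} to the tail $\sum_{j>n}\lambda_j\langle g,\phi_j\rangle_{L_2(\Omega)}^2$ by taking $u_j=\langle g,\phi_j\rangle_{L_2(\Omega)}$, then pull out $\lambda_{n+1}$ (resp.\ $\lambda_{n+1}^2$, using the Mercer representation of the RKHS norm) and invoke Mat\'ern eigenvalue decay. The only difference is in the cited spectral asymptotics --- the paper uses $\lambda_k\asymp k^{-2\nu/d}$ from Lemma 18 of Tuo and Wang (2020), while you use the Weyl-type rate $\lambda_j\asymp j^{-(2\nu+d)/d}$ --- but only an upper bound on $\lambda_{n+1}$ is needed, and your sharper rate simply delivers the stated bounds with room to spare.
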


Corollary \ref{coro:linear1} represents the convergence rate analogue to that of conventional GPs \citep{tuo2020kriging}, and shows that if we can design the input functions to be eigenfunctions of $\Psi$, the convergence rate of MSPE is polynomial. If the functional space is further assumed to be the RKHS associated with the kernel $\Psi$ (i.e. $g\in \mathcal{N}_{\Psi}(\Omega)$), which is
smaller than $L_2(\Omega)$, the convergence rate becomes faster as indicated by (\ref{eq:l1c12}). 
This result indicates a significant difference between the proposed GP defined on a functional space and the conventional one defined on a Euclidean space. 
That is, the convergence results of (\ref{eq:l1c11}) and (\ref{eq:l1c12}) depend on the norm of the functional space that the input $g$ lies in, which is different from that of conventional GPs which only involves the Euclidean norm.

Instead of selecting the input functions to be eigenfunctions, an alternative is to design the input functions through a set of \textit{knots} in $\Omega$, i.e.,  $\mathbf{X}_n\equiv\{\mathbf{x}_1,\ldots,\mathbf{x}_n\}$, where $\mathbf{x}_j \in \Omega$ for $j=1,\ldots,n$, and the convergence rate is derived in the following corollary.  We first denote $h_{\mathbf{X}_n,\Omega}$ as the \textit{fill distance} of $\mathbf{X}_n$, i.e., \begin{align*}
h_{\mathbf{X}_n,\Omega}:= \sup_{\mathbf{x}\in\Omega}\min_{\mathbf{x}_j\in \mathbf{X}_n}\|\mathbf{x}-\mathbf{x}_j\|_2.
\end{align*} 
Further, denote $q_{\mathbf{X}_n} = \min_{1\leq j\neq k\leq n}\|\mathbf{x}_j-\mathbf{x}_k\|/2$, and a design $\mathbf{X}_n$ satisfying $h_{\mathbf{X}_n,\Omega}/q_{\mathbf{X}_n}\leq C'$ for some constant $C'$ is called a \textit{
quasi-uniform design}.

\begin{corollary}\label{coro:linear2}
\begin{itemize}
\item[(1)] Suppose $g_j(\mathbf{x})=\Psi(\mathbf{x},\mathbf{x}_j)$, where $\mathbf{x},\mathbf{x}_j\in \Omega$ for $j=1,\ldots,n$. For $g\in \mathcal{N}_{\Psi}(\Omega)$, there exists a constant $C_3>0$ such that
\begin{align*}
    \mathbb{E}\left(f(g)-\hat f(g)\right)^2\leq C_3\|g\|_{\mathcal{N}_{\Psi}(\Omega)}^2 h_{\mathbf{X}_n,\Omega}^{2\nu}.
\end{align*}
\item[(2)]  For a quasi-uniform design $\mathbf{X}_n$, there exists a positive constant $C$ such that $h_{\mathbf{X}_n,\Omega} \leq  C n^{-1/d}$ \citep{wendland2004scattered,muller2009komplexitat}. Therefore,  there exists a constant $C_4>0$ such that
\begin{align}\label{corollary35}
    \mathbb{E}\left(f(g)-\hat f(g)\right)^2\leq C_4\|g\|_{\mathcal{N}_{\Psi}(\Omega)}^2  n^{-\frac{2\nu}{d}}.
\end{align}
\end{itemize}
\end{corollary}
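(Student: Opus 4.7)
The plan is to exploit the characterization of the mean squared prediction error given in Theorem \ref{thm:linear}, namely
\[
\mathbb{E}\!\left(f(g)-\hat f(g)\right)^2 = \min_{(u_1,\ldots,u_n)\in\mathbb{R}^n}\Bigl\|\mathcal{T}g - \sum_{j=1}^n u_j\,\mathcal{T}g_j\Bigr\|_{\mathcal{N}_\Psi(\Omega)}^2,
\]
so to prove part (1) it suffices to exhibit one particular choice of coefficients whose resulting approximation error matches $C_3\|g\|_{\mathcal{N}_\Psi(\Omega)}^2 h_{\mathbf{X}_n,\Omega}^{2\nu}$. The natural candidate comes from classical scattered-data kernel interpolation: for $g\in\mathcal{N}_\Psi(\Omega)$, let $s_g(\mathbf{x})=\sum_{j=1}^n \alpha_j \Psi(\mathbf{x},\mathbf{x}_j)$ be the unique RKHS interpolant of $g$ on the knots $\mathbf{X}_n=\{\mathbf{x}_1,\ldots,\mathbf{x}_n\}$. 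Because $g_j(\mathbf{x})=\Psi(\mathbf{x},\mathbf{x}_j)$ and $\mathcal{T}$ is linear, selecting $u_j=\alpha_j$ yields $\sum_j u_j\mathcal{T}g_j=\mathcal{T}s_g$, and hence the upper bound $\|\mathcal{T}(g-s_g)\|_{\mathcal{N}_\Psi(\Omega)}^2$.

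The next step is to pass from this $\mathcal{N}_\Psi$-norm to a classical norm of $g-s_g$ for which off-the-shelf error bounds exist. Using the Mercer expansion \eqref{eq:mercer1}, any $h\in L_2(\Omega)$ with expansion $h=\sum_k c_k\phi_k$ satisfies $\mathcal{T}h=\sum_k c_k\lambda_k\phi_k$, from which a direct computation gives
\[
\|\mathcal{T}h\|_{\mathcal{N}_\Psi(\Omega)}^2 = \sum_{k=1}^\infty c_k^2\,\lambda_k \;\leq\; \lambda_1\,\|h\|_{L_2(\Omega)}^2.
\]
Applied to $h=g-s_g$ this yields $\|\mathcal{T}(g-s_g)\|_{\mathcal{N}_\Psi(\Omega)}\leq\sqrt{\lambda_1}\,\|g-s_g\|_{L_2(\Omega)}$, reducing the problem to a scattered-data interpolation error in $L_2$. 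I would then invoke the standard sampling-inequality/interpolation-error result for Mat\'ern kernels (for instance Theorem 11.13 of \cite{wendland2004scattered}): on a compact, convex domain $\Omega$, whenever the fill distance is small enough,
\[
\|g-s_g\|_{L_\infty(\Omega)} \leq C\,h_{\mathbf{X}_n,\Omega}^{\nu}\,\|g\|_{\mathcal{N}_\Psi(\Omega)},
\]
and hence the same estimate (with a different constant absorbing $|\Omega|^{1/2}$) holds in $L_2(\Omega)$. Chaining the three inequalities produces part (1), and part (2) is then immediate by substituting the quasi-uniform bound $h_{\mathbf{X}_n,\Omega}\leq Cn^{-1/d}$.

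The main obstacle is ensuring the two auxiliary estimates are used in exactly the right form: the continuity $\mathcal{T}:L_2(\Omega)\to\mathcal{N}_\Psi(\Omega)$ needs to be verified with an explicit constant (this is precisely the computation behind Proposition 10.28 of \cite{wendland2004scattered} already cited after Theorem \ref{thm:linear}), and the Mat\'ern interpolation rate must be invoked in a form compatible with the present setting. Both rely crucially on the compactness and convexity of $\Omega$, which are assumed from Section 2 onward; modulo these technicalities, the argument is otherwise a short assembly of the pieces above.
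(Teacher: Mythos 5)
Your proposal is correct and follows essentially the same route as the paper's proof: reduce via Theorem \ref{thm:linear} to a specific choice of coefficients given by the kernel interpolant, use the Mercer expansion to bound the $\mathcal{N}_\Psi$-norm of $\mathcal{T}(g-s_g)$ by a constant times $\|g-s_g\|_{L_2(\Omega)}^2$, and then invoke the $O(h^{\nu})$ Mat\'ern interpolation error. The only cosmetic difference is that you cite the packaged $L_\infty$ interpolation bound directly, whereas the paper re-derives it from the power-function estimate of Wu and Schaback (its Lemma on $\Psi(\mathbf{x},\mathbf{x})-\mathbf{r}(\mathbf{x})^T\mathbf{R}^{-1}\mathbf{r}(\mathbf{x})\leq c\,h_{\mathbf{X}_n,\Omega}^{2\nu}$) via the reproducing property and Cauchy--Schwarz, explicitly noting it does so only for completeness.
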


As compared with the results in Corollary \ref{coro:linear1},
the convergence rate of quasi-uniform designs as shown in (\ref{corollary35}) is slower than the choice of eigenfunctions as shown in (\ref{eq:l1c12}). Despite a slower rate of convergence, designing functional inputs by 
$\Psi(\mathbf{x}_j,\cdot)$ with space-filling $\mathbf{x}_j$'s can be relatively easier in practice than finding eigenfunctions of $\Psi$. However, if the eigenfunctions of $\Psi$ are available, then the design based on Corollary \ref{coro:linear1} (i.e., the first $n$ eigenfunctions) would be recommended as the convergence rate is faster. Some kernel functions exist  closed form
expressions, such as Gaussian kernels \citep{zhu1997gaussian}. More generally, the eigenfunctions can be numerically approximated  via Nystr{\"o}m's method \citep{williams2000using}.



The proposed linear kernel can be naturally modified to accommodate the potential non-linearity in $f$ by enlarging the feature space using a pre-specified nonlinear transformation $\mathcal{M}$ on $g$, i.e., $\mathcal{M} :V\rightarrow V_1$, where $V_1$ is a function class. The resulting  kernel function can be written as 
\begin{align*}
    K(g_1,g_2)=\int_\Omega\int_\Omega \mathcal{M}\circ g_1(\mathbf{x})\mathcal{M}\circ g_2(\mathbf{x}')\Psi(\mathbf{x},\mathbf{x}'){\rm{d}}\mathbf{x}{\rm{d}}\mathbf{x}',
\end{align*}
and the corresponding GP can be constructed by
\begin{align}\label{eq:nlgpfromlgp}
    f(g) = \sum_{j=1}^\infty \sqrt{\lambda_j}\langle \phi_j,\mathcal{M}\circ g \rangle_{L_2(\Omega)}Z_j.
\end{align}
The convergence results of MSPE can be extended to \eqref{eq:nlgpfromlgp}. 
There are many possible ways to define $\mathcal{M}$ so that the feature space can be enlarged; however, the flexibility comes with a higher estimation complexity. In the next section, we propose an alternative to address the non-linearity through a kernel function, which is computationally more efficient.

\subsection{Nonlinear kernel for functional inputs}\label{sec:nonlinear}

In this section, we introduce a new type of kernel function for functional inputs that takes into account the non-linearity via a radial basis function. 
Let $\psi(r):\mathbb{R}^{+}\rightarrow\mathbb{R}$ be a radial basis function whose corresponding kernel in $\mathbb{R}^d$ is strictly positive definite for any $d\geq 1$. 
Note that the radial basis function of \eqref{matern1}, whose corresponding kernel is a Mat\'ern kernel, satisfies this condition.
Define $K:V\times V\rightarrow \mathbb{R}$ as
\begin{align}\label{eq:nonlinearkernel}
    K(g_1,g_2) = \psi(\gamma\|g_1-g_2\|_{L_2(\Omega)}),
\end{align}
where $\|\cdot\|_{L_2(\Omega)}$ is the $L_2$-norm of a function, defined by $\|g\|_{L_2(\Omega)}=(\langle g,g\rangle_{L_2(\Omega)})^{1/2}$, and $\gamma>0$ is a parameter that controls the decay of the kernel function with respect to the $L_2$-norm. 


While it is of great interest to consider other distance metrics to define the distance between two functions, such as Fr\'echet distance and $L_{\infty}$-norm, such resulting kernel functions is not necessary to be semi-positive definite, which is a required property for defining a kernel function. For example, consider an $L_{\infty}$-norm distance for the kernel function, that is,
$K(g_1,g_2) = \psi(\gamma\|g_1-g_2\|_{L_{\infty}(\Omega)}),$
for any $g_1,g_2\in L_{\infty}(\Omega)$, where $\psi$ has the form of \eqref{matern1} with $\nu=2.5$ and $\sigma^2=1$. Given the four training functional inputs, $g_1(x_1,x_2)=x_1^2,g_2(x_1,x_2)=x_2^2,g_3(x_1,x_2)=1+x_1,g_4(x_1,x_2)=1+x_2$, and $\gamma=0.5$, the kernel matrix is 
$$\mathbf{K}_n=
\left[\begin{array}{cccc}
     1& 0.8286& 0.7536& 0.5240\\
     0.8286& 1& 0.5240 &0.7536\\
     0.7536& 0.5240& 1 &0.8286\\
     0.5240& 0.7536& 0.8286 &1
\end{array}\right].
$$
Then, for a vector $\mathbf{a}=(1,-1,-1,1)^T$, it follows that $\mathbf{a}^T\mathbf{K}_n\mathbf{a}=-0.2331<0$, which implies that the kernel function is not semi-positive. Conditions on the metric $\|\cdot\|$ such that the resulting kernel function is positive definite will be pursued in the future. In the following proposition, we show that the kernel function with $\|\cdot\|_{L_2}$, defined as in \eqref{eq:nonlinearkernel}, is positive definite.

\begin{proposition}\label{prop:pdofnl}
The function $K$ defined in (\ref{eq:nonlinearkernel}) is positive definite on $V\times V$.
\end{proposition}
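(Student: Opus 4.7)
The plan is to reduce positive definiteness on the function space $V \times V$ to the finite-dimensional Euclidean case by isometrically embedding any finite configuration of input functions into some $\mathbb{R}^m$. The hypothesis that $\psi(\|\cdot - \cdot\|_2)$ is strictly positive definite on $\mathbb{R}^d$ for every $d\geq 1$ is tailored exactly so that this reduction closes.

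First I would fix $n\geq 1$ and pairwise distinct $g_1,\ldots,g_n \in V$, and set out to show that the Gram matrix $\bigl[\psi(\gamma \|g_i - g_j\|_{L_2(\Omega)})\bigr]_{i,j=1}^n$ is positive definite. Because $L_2(\Omega)$ is a Hilbert space, the family $\{g_1,\ldots,g_n\}$ spans a subspace of dimension $m \leq n$. Applying a Cholesky-type factorization to the centred Gram matrix with entries $\langle g_i - g_1,\, g_j - g_1\rangle_{L_2(\Omega)}$ yields points $\mathbf{x}_1,\ldots,\mathbf{x}_n \in \mathbb{R}^m$ satisfying $\|\mathbf{x}_i - \mathbf{x}_j\|_2 = \|g_i - g_j\|_{L_2(\Omega)}$ for all $i,j$.

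Next I would verify that the embedded points remain distinct. Since the $g_i$ are continuous functions on the compact region $\Omega$, which has positive Lebesgue measure, pairwise distinct $g_i$ force $\|g_i - g_j\|_{L_2(\Omega)} > 0$ whenever $i\neq j$; rescaling by $\gamma>0$ preserves this, so the points $\gamma\mathbf{x}_1,\ldots,\gamma\mathbf{x}_n$ are pairwise distinct in $\mathbb{R}^m$. Combined with the identity
\[
K(g_i, g_j) \;=\; \psi\bigl(\gamma \|g_i - g_j\|_{L_2(\Omega)}\bigr) \;=\; \psi\bigl(\|\gamma\mathbf{x}_i - \gamma\mathbf{x}_j\|_2\bigr),
\]
the assumption that $\psi(\|\cdot-\cdot\|_2)$ is strictly positive definite on $\mathbb{R}^m$ immediately gives positive definiteness of $[K(g_i,g_j)]_{i,j=1}^n$, which is the desired conclusion.

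The only nontrivial step is the isometric embedding, but it is just finite-dimensional linear algebra exploiting the Hilbert-space structure of $L_2(\Omega)$, so I do not expect any real obstacle. In effect the proof packages the standard fact that any finite metric subspace of a Hilbert space embeds isometrically into Euclidean space, together with the \emph{dimension-free} strict positive definiteness of $\psi(\|\cdot-\cdot\|_2)$, which is precisely the hypothesis imposed on $\psi$.
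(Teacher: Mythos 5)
Your proposal is correct and follows essentially the same route as the paper's proof: both isometrically embed the finite set $\{g_1,\ldots,g_n\}$ into $\mathbb{R}^m$ (the paper via a Gram--Schmidt orthonormal basis of their span, you via a Cholesky factorization of the centred Gram matrix) and then invoke the dimension-free strict positive definiteness of $\psi(\|\cdot-\cdot\|_2)$. Your explicit handling of the $\gamma$-rescaling and of the distinctness of the embedded points is a slightly more careful writeup of the same argument.
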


Assume that there exists a probability measure $P$ on $V$ such that $\int_V g(t)^2 dP(g) <\infty$, for $t\in \Omega$ \citep{ritter2007average}.
Based on the positive definite function $K$ as in \eqref{eq:nonlinearkernel}, we can construct a GP via the Karhunen--Lo\`eve expansion:
\begin{align}\label{eq:nlexp}
    f(g)=\sum_{j=1}^\infty \sqrt{\lambda_j}\varphi_j(g)Z_j,
\end{align} 
where $\varphi_j$'s are the orthonormal basis obtained by a generalized version of Mercer's theorem, $ K(g_1,g_2)=\sum_{j=1}^\infty \lambda_j\varphi_j(g_1)\varphi_j(g_2)$ \citep{steinwart2012mercer} with respect to the probability measure $P$.

The nonlinear kernel in \eqref{eq:nonlinearkernel} can be viewed as a basis expansion of the functional input based on the fact that 
$\|g_1-g_2\|^2_{L_2(\Omega)} = \sum^\infty_{j= 1}\langle\phi_j,g_1-g_2\rangle_{L_2(\Omega)}^2$, where $\{\phi_j\}^\infty_{j=1}$ are orthonormal basis functions in $L_2(\Omega)$. By a finite truncation of basis expansion, the input $g$ can be approximated by  $\{\phi_j\}^M_{j=1}$ in $\mathbb{R}^M$, for a positive integer $M$, and therefore $f(g)$ can be approximated by a GP with the correlation function, $\psi(\gamma\|\cdot\|_2)$, where $\|\cdot\|_2$ is the Euclidean norm on $\mathbb{R}^M$. However, similar to the discussions in Section \ref{sec:linear}, this introduces additional bias due to the finite truncation and requires an explicit specification of the orthonormal functions $\{\phi_j\}^M_{j=1}$ and $M$ in advance. Instead, the proposed method directly evaluates the correlation on a functional space through the nonlinear kernel without approximation, and the application only requires  the selection of a proper kernel function.

It is also worth noting that the $L_2$ norm in \eqref{eq:nonlinearkernel} can be replaced by any Hilbert space norm, such as RKHS norm. Therefore,
the nonlinear kernel \eqref{eq:nonlinearkernel} is flexible and can be generalized to any target space of interest in practice.
Nevertheless, the $L_2$ norm  can be approximated by numerical integration methods, such as Monte Carlo integration \citep{caflisch1998monte}, which is computationally more efficient compared to, for example, the RKHS norm that requires the computation of inverting an $N\times N$ matrix, where $N$ is the size of grid points. 

Based on the proposed nonlinear kernel, the convergence rate of the MSPE is studied in the following theorem.



\begin{theorem}\label{coro:nlmatern}
Suppose that $\Phi$ is a Mat\'ern kernel function with smoothness $\nu_1$, and $\psi$ is the radial basis function of \eqref{matern1}, whose corresponding kernel is Mat\'ern with smoothness $\nu$.
Let $\tau = \min(\nu,1)$. 
For any $n>N_0$ with  a constant $N_0$, there exist $n$ input functions such that for any $g\in \mathcal{N}_{\Phi}(\Omega)$ with $\|g\|_{\mathcal{N}_{\Phi}(\Omega)}\leq 1$, the MSPE can be bounded by
\begin{align}\label{eq:nlmaternX}
    \mathbb{E}\left(f(g)-\hat f(g)\right)^2 \leq C_5(\log n)^{-\frac{(\nu_1+d/2)\tau}{d}}\log \log n.
\end{align}
\end{theorem}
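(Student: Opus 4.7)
The plan is to reduce the MSPE to an RKHS distance in $\mathcal{N}_K$, and then bound that distance using Hölder regularity of $\psi$ at the origin together with metric entropy estimates for the unit ball of $\mathcal{N}_\Phi(\Omega)$ inside $L_2(\Omega)$. Repeating the RKHS argument that underlies Theorem \ref{thm:linear} yields the representation
\begin{align*}
    \mathbb{E}\left(f(g) - \hat f(g)\right)^2 = \min_{u \in \mathbb{R}^n}\left\| K(\cdot, g) - \sum_{j=1}^n u_j K(\cdot, g_j)\right\|^2_{\mathcal{N}_K}.
\end{align*}
Restricting the infimum to $u = e_{j^*}$, where $j^*$ indexes the design function closest to $g$ in $L_2(\Omega)$, and expanding the squared norm using the reproducing property then produces the upper bound $2\bigl(\psi(0) - \psi(\gamma\|g - g_{j^*}\|_{L_2(\Omega)})\bigr)$. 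A series expansion of the modified Bessel function $B_\nu$ together with the explicit form \eqref{matern1} of $\psi$ provides a Hölder estimate $\psi(0) - \psi(r) \leq C_\psi r^{\alpha}$ near $r=0$ with exponent determined by $\tau = \min(\nu,1)$: for $\nu < 1$ the dominant non-constant term of $\psi(r)$ near zero is of order $r^{2\nu}$, while for $\nu \geq 1$ the function is twice differentiable at the origin and Taylor's theorem applies. The MSPE is thereby reduced to a power of $\|g - g_{j^*}\|_{L_2(\Omega)}$.

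The remaining task is to choose the $n$ design functions so that every $g \in B_\Phi := \{g \in \mathcal{N}_\Phi(\Omega): \|g\|_{\mathcal{N}_\Phi(\Omega)} \leq 1\}$ has a design function within $L_2$-distance $\epsilon_n$. Since the Matérn RKHS $\mathcal{N}_\Phi(\Omega)$ is norm-equivalent to the Sobolev space $H^{\nu_1 + d/2}(\Omega)$, classical metric entropy bounds of Birman--Solomyak and Edmunds--Triebel type give $\log N(\epsilon;\, B_\Phi,\, L_2(\Omega)) \lesssim \epsilon^{-d/(\nu_1 + d/2)}$, from which $n$ appropriately placed input functions realize an $\epsilon_n$-net of $B_\Phi$ with $\epsilon_n \lesssim (\log n)^{-(\nu_1 + d/2)/d}$. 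The Sobolev embedding $H^{\nu_1 + d/2}(\Omega) \hookrightarrow C(\Omega)$, which holds for $\nu_1 > 0$, guarantees that this net lies inside $V$ as required. Substituting $\epsilon_n$ into the Hölder bound yields a rate of the form \eqref{eq:nlmaternX}, the $\log\log n$ factor emerging from the inversion of the entropy relation and from ensuring $\epsilon_n$ remains in the local validity range of the Hölder estimate for $\psi$.

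The main technical obstacle I anticipate lies in the metric entropy step: although Birman--Solomyak-type estimates for Sobolev embeddings are classical, transferring them cleanly to the Matérn RKHS with explicit constants, and coordinating the resulting decay rate of $\epsilon_n$ with the Hölder exponent of $\psi$ to recover precisely the exponent $(\nu_1 + d/2)\tau/d$, requires careful bookkeeping. A secondary subtlety is realizing the $\epsilon$-net within $V \subset C(\Omega)$ rather than only in $L_2(\Omega)$, which is handled by the same Sobolev embedding; a mild smoothness assumption on $\partial\Omega$ may be needed for the entropy bound to apply in its sharpest form, but $\Omega$ compact and convex (as assumed in Section 2) suffices.
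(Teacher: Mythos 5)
Your proposal is correct in outline but follows a genuinely different and more elementary route than the paper. The paper's proof embeds each input into the sequence of its Mercer coefficients with respect to $\Phi$, truncates to the first $m$ coordinates, builds weights $\tilde u_j$ by local polynomial reproduction (Wendland, Thm.\ 11.8) on the resulting $m$-dimensional box, and splits the power function into a polynomial-approximation term $I_1$ (controlled via the Bessel expansion of $\psi$) plus a truncation term $I_2$ (controlled by H\"older continuity of $\psi$ and the eigenvalue decay $\lambda_{\Phi,m}\asymp m^{-2\nu_1/d}$); only at the end does it convert $m$ to $n$ through the same Sobolev-ball entropy bound you invoke. You bypass all of the finite-dimensional machinery by taking $u=e_{j^*}$, which gives the one-point bound $\mathbb{E}(f(g)-\hat f(g))^2\leq 2\bigl(\psi(0)-\psi(\gamma\|g-g_{j^*}\|_{L_2(\Omega)})\bigr)$ directly from the expansion of $\mathbb{E}(f(g)-\sum_j u_jf(g_j))^2$, and then only need the behavior of $\psi$ at the origin plus an internal $\epsilon_n$-net of the unit ball. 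What this buys is simplicity and, in fact, a sharper exponent: since $\psi(0)-\psi(r)\leq Cr^{2\tau}$ near $0$ (with exponent $2\tau$, not $\tau$ --- be explicit about this), your bound is $C\epsilon_n^{2\tau}\asymp(\log n)^{-2(\nu_1+d/2)\tau/d}$, which implies \eqref{eq:nlmaternX} with room to spare; the paper's extra machinery is aimed at exploiting higher-order smoothness of $\psi$ through $I_1$, but its rate is capped by $I_2$ at the weaker exponent anyway. Two small points to tidy up: at $\nu=1$ the local expansion is $\psi(0)-\psi(r)\asymp r^2\log(1/r)$ rather than $Cr^2$, so a logarithmic correction appears there (harmless, and absorbed by the $\log\log n$ factor in the statement); and your attribution of the $\log\log n$ factor to ``inversion of the entropy relation'' is not quite right --- in your argument no such factor arises except through this $\nu=1$ correction, whereas in the paper it comes from the $\log m$ in the $I_1$ bound. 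Neither point affects the validity of the upper bound.
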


Based on (\ref{eq:nlmaternX}), it appears that the convergence rate is slower than the conventional GP where the inputs are defined in the Euclidean space. Although potentially there are rooms to further sharpen this rate, a slower rate of convergence for functional inputs is not surprising because the input space is much larger than the Euclidean space. Note that since the RKHS generated by a Mat\'ern kernel function with smoothness $\nu_1$ is equivalent to the (fractional) Sobolev space $H^{\nu_1+d/2}(\Omega)$ \citep{wendland2004scattered}, the assumption of $g\in \mathcal{N}_{\Phi}(\Omega)$ in Theorem  \ref{coro:nlmatern} is equivalent to $g\in H^{\nu_1+d/2}(\Omega)$.

If $\Phi$ is a squared exponential kernel, the corresponding RKHS is within the RKHS generated by a Mat\'ern kernel function with any smoothness $\nu_2$. Thus, one can choose a large $\nu_2>\nu_1$ and apply Theorem \ref{coro:nlmatern} to obtain the same convergence rate as in (\ref{eq:nlmaternX}) by replacing $\nu_1$ with $\nu_2$. Therefore, one can conclude that the convergence rate of RKHS generated by a squared exponential kernel is faster than that of the RKHS generated by a Mat\'ern kernel function with a fixed $\nu_1$.

\subsection{Selection of kernels}\label{sec:selectionkernel}
Based on Sections 3.1 and 3.2, the linear kernel of \eqref{eq:lk} results in a less flexible model leading to a lower prediction variance but higher bias, while the nonlinear kernel of \eqref{eq:nonlinearkernel} results in a more flexible model leading to a higher variance but lower bias \citep{friedman2017elements}. 
To find an optimal kernel function that balances the bias--variance trade-off, the idea of cross-validation is adopted that allows us to select the kernel by minimizing the estimated prediction error.

Although cross-validation methods are typically expensive to implement in many situations, the leave-one-out cross-validation (LOOCV) of GP models can be expressed in a closed form, which makes computation less demanding \citep{zhang2010kriging,rasmussen2006gaussian,currin1988bayesian}. 
Specifically, denote $\tilde{y}_i$ as the prediction mean based on all data except $i$th observation and $y_i$ as the real output of $i$th observation. One can show that, for a kernel candidate $K$, which can be either the linear kernel \eqref{eq:lk} or the nonlinear kernel \eqref{eq:nonlinearkernel},  the LOOCV error is
\begin{equation}\label{eq:LOOCV}
    \frac{1}{n}\sum^{n}_{i=1}(y_i-\tilde{y}_i)^2=\frac{1}{n}\| \boldsymbol{\Lambda}_n^{-1}\mathbf{K}_n^{-1}(\mathbf{y}_n -\boldsymbol{\mu}_n)\|^2_2,
\end{equation}
where 
$\boldsymbol{\Lambda}_n$ is a diagonal matrix with the element $(\boldsymbol{\Lambda}_n)_{j,j}=(\mathbf{K}_n^{-1})_{j,j}$. Thus, between the linear and nonlinear kernels, the optimal one can be  selected by minimizing the LOOCV error.  


\subsection{Generalization to multiple functional-input variables}
Both of the linear and nonlinear kernel functions developed in Sections \ref{sec:linear} and \ref{sec:nonlinear} can be naturally extended to multiple functional-input variables. For example, suppose that there are two functional-input variables, $g\in V$ and $h\in V$, and the $n$ inputs, $\{(g_1,h_1), \ldots,(g_n,h_n)\}$, are collected. In such cases, the linear kernel \eqref{eq:lk} can be rewritten as
\begin{align*}
    K((g_1,h_1),(g_2,h_2))=\int_{\Omega}\int_{\Omega} \left(g_1(\mathbf{x})g_2(\mathbf{x}')+h_1(\mathbf{x})h_2(\mathbf{x}')\right)\Psi(\mathbf{x},\mathbf{x}'){\rm{d}}\mathbf{x}{\rm{d}}\mathbf{x}',
\end{align*}
and the nonlinear kernel \eqref{eq:nonlinearkernel} can be rewritten as
\begin{align*}
    K((g_1,h_1),(g_2,h_2)) = \psi\left(\left(\gamma_1\|g_1-g_2\|_{L_2(\Omega)}^2+\gamma_2\|h_1-h_2\|_{L_2(\Omega)}^2\right)^{1/2}\right),
\end{align*}
where $\gamma_1,\gamma_2>0$ are the parameters.

The nonlinear kernel also can be naturally generalized to the mixture of functional inputs and scalar inputs. That is, suppose that in addition to the two functional-input variables, $g,h\in V$, there exists a scalar input variable in the experiment, denoted by $z\in\Omega'\subseteq\mathbb{R}$, then a kernel function can be defined as 
\begin{align*}
    K((g_1,h_1,z_1)&,(g_2,h_2,z_2)) =\\
    &\psi\left(\left(\gamma_1\|g_1-g_2\|_{L_2(\Omega)}^2+\gamma_2\|h_1-h_2\|_{L_2(\Omega)}^2+\gamma_3(z_1-z_2)^2\right)^{1/2}\right),
\end{align*}
where $\gamma_3>0$.

\section{Numerical Study}\label{sec:numericstudy}

In this section, numerical experiments are conducted to examine the emulation performance of the proposed method. In Supplementary Material \ref{sec:samplepath}, the sample paths of the functional-input GP with different parameter settings are explored.

In these numerical studies, the quasi-Monte Carlo integration \citep{morokoff1995quasi} is used to numerically evaluate the integrals in the kernels. Specifically, suppose that $\Omega$ is a unit cube, then the linear kernel \eqref{eq:lk} can be approximated by
\begin{equation}\label{eq:linearkernelMC}
K(g_1,g_2)\approx\frac{1}{N^2}\sum^N_{i=1}\sum^N_{j=1} g_1(\mathbf{x}_i)g_2(\mathbf{x}'_j)\Psi(\mathbf{x}_i,\mathbf{x}'_j),
\end{equation}
where $\{\mathbf{x}_i\}^N_{i=1}$ and $\{\mathbf{x}'_j\}^N_{j=1}$ are low-discrepancy sequences from a unit cube, for which the Sobol sequence \citep{sobol1967distribution,bratley1988algorithm} is considered here. The number of points in the sequence, $N=5,000$, is set. 
Similarly, the $L_2$-norm in the nonlinear kernel \eqref{eq:nonlinearkernel} can be approximated by
\begin{equation}\label{eq:nonlinearkernelMC}
\|g_1-g_2\|_{L_2(\Omega)}\approx\left(\frac{1}{N}\sum^N_{i=1}(g_1(\mathbf{x}_i)-g_2(\mathbf{x}_i))^2\right)^{1/2}.
\end{equation}

The prediction performance of the proposed method is examined by three synthetic examples, including a linear operator, $f_1(g)=\int_\Omega\int_\Omega g(\mathbf{x}) {\rm{d}}x_1{\rm{d}}x_2$,
and two nonlinear ones, $f_2(g)=\int_\Omega\int_\Omega g(\mathbf{x})^3  {\rm{d}}x_1{\rm{d}}x_2$ and $f_3(g)=\int_\Omega\int_\Omega\sin(g(\mathbf{x})^2)  {\rm{d}}x_1{\rm{d}}x_2$, where $\mathbf{x}=(x_1,x_2)\in \Omega\equiv[0,1]^2$ and $g(\mathbf{x}):[0,1]^2\rightarrow \mathbb{R}$.
Eight functional inputs, which are shown in the first row of Table \ref{tab:linear_simulation}, are considered for each of the synthetic examples and their outputs are given in Table \ref{tab:linear_simulation}.

Three types of functional inputs are tested for predictions: $g_9(\mathbf{x})=\sin(\alpha_1x_1+\alpha_2 x_2),g_{10}(\mathbf{x})=\beta+x_1^2+x_2^3$, and $g_{11}(\mathbf{x})=\exp\{-\kappa x_1x_2\}$, where $\alpha_1,\alpha_2,\beta,\kappa\in[0,1]$. 
Based on 100 random samples of $\alpha_1,\alpha_2,\beta$ and $\kappa$ from $[0,1]$, the prediction performance is evaluated by averaging the mean squared errors (MSEs), where
$
\text{MSE}=\frac{1}{3}\sum^{11}_{j=9}(f(g_j)-\hat{f}(g_j))^2.
$


The proposed method is performed with
the Mat{\'e}rn kernel function with the smoothness parameter $\nu=5/2$, which leads to a simplified form of \eqref{matern1}:
\begin{equation}\label{eq:maternkernel2.5}
    \psi(r)=\left(1+\sqrt{5}r+\frac{5}{3}r^2\right)\exp\left(-\sqrt{5}r\right).
\end{equation}
Other parameters, including $\Theta,\sigma^2$ and $\gamma$, are estimated via maximum likelihood estimation. 
Both of the linear kernel \eqref{eq:lk} and the nonlinear kernel \eqref{eq:nonlinearkernel} are performed for the proposed functional input GP and their LOOCV errors are reported in Table \ref{tab:prediction_simulation}. According to Section 3.3, LOOCV is then used to identify the optimal kernel. By minimizing the LOOCV errors, the linear kernel is identified as the optimal choice for the linear synthetic example, $f_1(g)$, and the nonlinear kernel is identified as the optimal choice for the nonlinear synthetic examples, $f_2(g)$ and $f_3(g)$. For the three synthetic examples, their MSEs are summarized in Table \ref{tab:prediction_simulation}. It appears that the optimal kernels selected by LOOCV are consistent with the selections based on minimizing the MSEs, which shows that LOOCV is a reasonable indicator of the optimal kernel when the ground truth is unknown. 

\begin{table}[t]
\begin{center}
\begin{tabular}{ c|c|c|c|c } 
 \toprule
Measurements & Kernel  & $f_1(g)=\int_\Omega\int_\Omega g$ & $f_2(g)=\int_\Omega\int_\Omega g^3$& $f_3(g)=\int_\Omega\int_\Omega \sin(g^2)$ \\ 
 \midrule
 \multirow{2}{*}{LOOCV} &  linear& $\boldsymbol{7.9\times 10^{-7}}$ & 1.813 & 0.454\\ 
 & nonlinear& $2.2\times 10^{-6}$ & \textbf{0.227} & \textbf{0.017}\\ 
 \midrule
\multirow{2}{*}{MSE} &  linear& $\boldsymbol{6.4\times 10^{-10}}$ & 1.087& 0.140\\ 
 & nonlinear& $3.1\times 10^{-7}$ & \textbf{0.012} & \textbf{0.016}\\
 \bottomrule
\end{tabular}
\end{center}
    \caption{The leave-one-out cross-validation errors (LOOCVs) and the mean squared errors (MSEs)  for three testing functions, in which the errors corresponding to the optimal kernel are boldfaced.}
    \label{tab:prediction_simulation}
\end{table}

The computational cost is also assessed for the two kernel choices. The numerical experiments were performed on a MacBook Pro laptop with Apple M1 Max of Chip and 32 GB of RAM. The computation for linear kernels in each of the examples takes about 9 seconds, while that for nonlinear kernels takes less than 1 second, indicating that the linear kernel requires more computation than the nonlinear kernel. This is not surprising, because the computation for linear kernels involves double integrals (see \eqref{eq:lk}) which in turn requires $N^2$ evaluations for the quasi-Monte Carlo integration as in  \eqref{eq:linearkernelMC}, while the nonlinear kernel (see \eqref{eq:nonlinearkernel} and \eqref{eq:nonlinearkernelMC}) only requires $N$ evaluations. Furthermore, the linear kernel has $d$ lengthscale parameters that need to be estimated, while the nonlinear kernel only has one lengthscale parameter. Nonetheless, fitting the functional-input GP model is reasonably efficient with either a linear or nonlinear kernel, both of which take less than 10 seconds.

As a comparison, we consider a \textit{basis-expansion} approach discussed in Section \ref{sec:nonlinear}. That is, consider a functional principal component analysis (\texttt{FPCA}) with \textit{truncated} components \citep{rice1991estimating,wang2016functional}:
$$
g_i(\mathbf{x})\approx u(\mathbf{x})+\sum^M_{j=1}z_{ij}\psi_j(\mathbf{x}),
$$
with the leading $M$ eigenfunctions $\{\psi_j(\mathbf{x})\}^M_{j=1}$ and the corresponding coefficients $\{z_{ij}\}$ given by:
\begin{align}\label{eq:klexpansion}
\psi_j(\mathbf{x}) &= \argmax_{\substack{\| \phi \|_2 = 1, \\ \langle \phi, \psi_l \rangle = 0, \forall l < j}} \sum^n_{i=1} \left\{ \int \left(g_i(\mathbf{x})-u(\mathbf{x})\right) \phi(\mathbf{x})  d\mathbf{x}\right\}^2, \nonumber\\
z_{ij} &= \int \left(g_i(\mathbf{x})-u(\mathbf{x})\right) \psi_j(\mathbf{x}) \; d\mathbf{x},
\end{align}
where $n=8$ in this example. The number of components, $M=3$, is chosen that explains 99.46\% of variance. We refer more details regarding the expansion to \cite{wang2016functional} and \cite{mak2017efficient}. Given the training input-output pairs, $\{\mathbf{z}_i, y_i\}^{n}_{i=1}$ where $\mathbf{z}_i=(z_{i,1},\ldots,z_{i,M})$, a conventional GP (with a Mat{\'e}rn kernel) is adopted to fit the training data.
The test input, $\{\mathbf{z}_i\}^{11}_{i=9}$ can be similarly obtained by \eqref{eq:klexpansion},  and their outputs are predicted by the fitted GP.

In addition to FPCA, we also consider a Maclaurin series expansion of degree 3, which is a Taylor series expansion of a function evaluated at 0 truncated to degree 3 (labeled \texttt{T3}). That is, 
$$
g_j(\mathbf{x})\approx \sum_{\substack{a=0,b=0\\ a+b\leq 3}}\frac{\partial^{a+b} g_j(0,0)}{\partial x^{a}_1\partial x^b_2}x_1^{a}x_2^{b}
$$
The series expansion can  approximate the functional inputs of the examples reasonably well with only a few non-zero coefficients. For example, the training functional input $g_1(\mathbf{x})=x_1+x_2$, has the coefficient 1 for both terms of $x_1$ and $x_2$ and 0 for other terms.

To evaluate the prediction performance and quantify the uncertainty, in addition to MSEs, we further consider two  numerical measurements: an average coverage rate of the 95\% prediction intervals and an average proper scores. Coverage rate is the proportion of the times that the interval contains the true value, and the proper score is the scoring rule by \cite{gneiting2007strictly}, which is an overall measure of the accuracy of the combined prediction mean and variance predictions. Specifically, the proper score has the following form:
$$
\text{proper score}=-\left(\frac{y-\mu_P}{\sigma_P}\right)^2-\log\sigma^2_P,
$$
where $y$ is the true output, $\mu_P$ is the predictive mean, and $\sigma^2_P$ is the predictive variance. A larger proper score indicates a better prediction. The results are summarized in Table \ref{tab:prediction_comparison}, which shows that the proposed method, \texttt{FIGP}, outperforms the two basis-expansion approaches in terms of both  predictions and uncertainty quantification. The average coverages of \texttt{FIGP} are close to the  nominal coverage 95\%, and the scores of \texttt{FIGP} are much higher than the two basis-expansion approaches.

\section{Applications to Inverse Scattering Problem}\label{sec:realcase}

In this section, we revisit the inverse scattering problem shown in Figure \ref{fig:realcase_real}. Let $D\subset \mathbb{R}^2$ denote an inhomogeneous isotropic scattering region of interest, and the functional input $g$, the support of which is $D$, is related to the refractive index for the region $D$ of the unknown scatterer. 
Given a set of finite element simulations as the training data, the goal of inverse scattering is to recover the functional input from a given far-field pattern. To achieve this goal, an important task is to construct a surrogate model for functional inputs.

In this study, 10 functional inputs, including $1,1+x_1,1-x_1,1+x_1x_2,1-x_1x_2,1+x_2,1+x_1^2,1-x_1^2,1+x_2^2$, and $1-x_2^2$, are conducted in the training set and the corresponding far-field patterns are shown in Figure \ref{fig:realcase_real}. 
Note that the inputs herein are given with explicit functional forms. In other applications where   discrete realizations of functions are available, the kernel functions can be numerically approximated by the discrete realizations as in \eqref{eq:linearkernelMC} and \eqref{eq:nonlinearkernelMC}.
The pre-processing of using principal component analysis (PCA) is first applied to reduce the dimension of the output images. The first three principal components, denoted by $\mathbf{u}_l\in\mathbb{R}^{1024},l=1,2,3$, are shown in Figure \ref{fig:realcase_pc}, which explain more than 99.99\% variability of the data. Therefore, given the functional input $g_i$ for $i=1,\ldots, 10$, the output of far-field images can be  approximated by $\sum^3_{l=1}f_l(g_i)\mathbf{u}_l$, where $f_1(g_i), f_2(g_i), f_3(g_i)$ are the first three PC scores. 

\begin{figure}[]
    \centering
    \includegraphics[width=\textwidth]{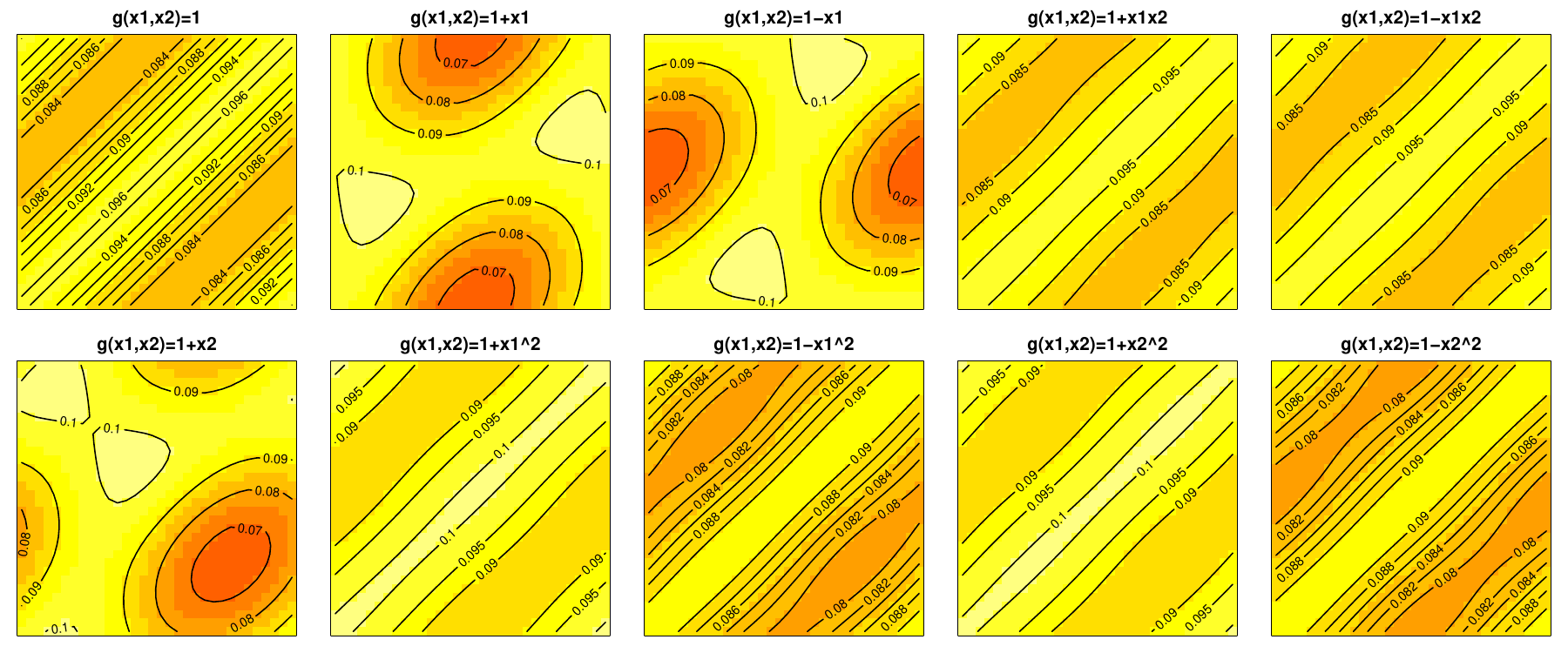}
    \caption{Training data in the application of the inverse scattering problem.}
    \label{fig:realcase_real}
\end{figure}

After the dimension reduction, the three-dimensional outputs $f_{1}(g),f_{2}(g)$ and $f_{3}(g)$ are assumed to be mutually independent and follow the functional-input GP as the surrogate model.
For any untried functional input $g\in V$, based on the results of \eqref{mean} and \eqref{var}, the far-field pattern can be predicted by the following normal distribution:  
$$
\mathcal{N}\left(\sum^3_{l=1}\mathbf{k}_l(g)^T \mathbf{K}^{-1}_{n,l} (\mathbf{f}_l-\mu_l\mathbf{1}_n)\mathbf{u}_l,\sum^3_{l=1}(K_l(g,g)-\mathbf{k}_l(g)^T \mathbf{K}^{-1}_{n,l} \mathbf{k}_l(g))\mathbf{u}^2_l\right),
$$
where $\mathbf{f}_l=(f_l(g_1),\ldots,f_l(g_{n}))$, $\mathbf{k}_l(g)=(K_l(g,g_1),\ldots,K_l(g,g_n))^T,(\mathbf{K}_{n,l})_{i,j}=K_l(g_i,g_j)$, and $K_l$ is the kernel function with the hyperparameters estimated based on  $\mathbf{f}_l$. 

The linear kernel of \eqref{eq:lk} and the nonlinear kernel of \eqref{eq:nonlinearkernel} are both performed. The optimal kernel is selected by comparing the LOOCV errors in predicting the far-field pattern. The LOOCV error based on the linear kernel is $3.6\times 10^{-6}$, which is smaller than that of the nonlinear kernel, $1.2\times 10^{-5}$. Therefore, we apply the linear kernel and examine its prediction performance for the test function, $g(\mathbf{x})=1-\sin(x_2)$.
Similar to Section \ref{sec:numericstudy}, two basis-expansion approaches, \texttt{FPCA} and \texttt{T3}, are compared. The images of the true far-field patterns and their predictions, along with their variances (in logarithm), are illustrated in Figure \ref{fig:realcase_prediction}.  Compared to the ground truth, the predictions of \texttt{FIGP} can capture the underlying structures reasonably well with some discrepancies appearing on the lower right corner. On the other hand, both of the predictions of \texttt{FPCA} and \texttt{T3} appear to deviate more from the ground truth. The MSEs and average scores are reported in Table \ref{tab:prediction_comparison_realcase}, which indicates that the proposed method can outperform the two basis-expansion approaches in terms of prediction accuracy and uncertainty quantification.


\begin{table}[!h]
\begin{center}
\begin{tabular}{ c|C{3cm}|C{3cm}|C{3cm}} 
 \toprule
Measurements & \texttt{FIGP} & \texttt{FPCA}& \texttt{T3} \\ 
 \midrule
 MSE &   $\boldsymbol{1.10\times 10^{-6}}$ & $1.07\times 10^{-4}$ & $9.06\times 10^{-5}$\\ 
Score & \textbf{12.13} & 6.89 & 6.39\\ 
 \bottomrule
\end{tabular}
\end{center}
    \caption{Prediction performance of the FIGP and basis-expansion approaches in the inverse scattering problem application (\texttt{FPCA} indicates an FPCA expansion approach and \texttt{T3} indicates the Taylor series expansion of degree 3), including MSEs and the average proper scores, in which the values with better performances are boldfaced.}
    \label{tab:prediction_comparison_realcase}
\end{table}

\section{Concluding Remarks}

Although GP surrogates are widely used in the analysis of complex system as an alternative to the direct analysis using computer experiments, most of the existing work are not applicable to the problems with functional inputs. To address this issue, two new types of kernel functions are introduced for functional-input GPs, including a linear and nonlinear kernel. Theoretical properties of the proposed surrogates, such as the convergence rate of the mean squared prediction error, are discussed. Numerical studies and the application to surrogate modeling in an inverse scattering problem show high prediction accuracy of the proposed method.

There are extensive studies in experimental design for conventional GP surrogate models, but the study of optimal designs for GPs with functional inputs remains scarce. In this paper, it is shown that space-fillingness is a desirable property in controlling the convergence rate of MSPE. An interesting topic for future research is to explore the construction procedure for efficient space-filling designs with functional inputs. 
Apart from experimental designs, another important future work is to explore systematic approaches to efficiently identify the functional input given an observed far-field pattern, which is the ultimate goal of inverse scattering problems. Based on the proposed GP surrogate, we will explore a Bayesian inverse framework that integrates computer experiments and real observations. Last but not the least, even though the numerical studies in Supplementary Material \ref{sec:samplepath} indicate that the smoothness parameter $\nu$ in the linear kernel function does not much affect the sample paths, it is worth exploring the theoretical properties regarding the choice of the parameter. We leave it for the future work.

\vspace{0.5cm}
\section*{Supplementary Material} 
Additional supporting materials can be found in Supplemental Materials, including the theoretical proofs of Propositions \ref{Prop1} and \ref{prop:pdofnl}, Theorems \ref{thm:linear} and \ref{coro:nlmatern},  and Corollaries \ref{coro:linear1} and \ref{coro:linear2}, the sample paths of the functional-input GP, the supporting tables and figures for Sections \ref{sec:numericstudy} and \ref{sec:realcase}, and the data, \textsf{R} code for reproducing the results in Sections
\ref{sec:numericstudy} and \ref{sec:realcase}.

\section*{Acknowledgments}
The authors gratefully acknowledge funding from NSF DMS-2113407, DMS-2107891, CCF-1934924, and NSFC 12101149.
\par

\bibliography{bib}

\def\spacingset#1{\renewcommand{\baselinestretch}%
{#1}\small\normalsize} \spacingset{1.3}

\newpage
\setcounter{page}{1}
\bigskip
\bigskip
\bigskip
\begin{center}
{\Large\bf Supplementary Materials for ``Functional-Input Gaussian Processes with Applications to Inverse Scattering Problems''}
\end{center}
\medskip

\setcounter{section}{0}
\setcounter{equation}{0}
\setcounter{figure}{0}
\setcounter{table}{0}
\def\theequation{S\arabic{section}.\arabic{equation}}
\def\thesection{S\arabic{section}}
\def\thefigure{S\arabic{figure}}
\def\thetable{S\arabic{table}}

\section{Proof of Proposition \ref{Prop1}}

For any non-zero $(\alpha_1,...,\alpha_n)$ and $(g_1,...,g_n)$, it follows the quadratic form:
\begin{align*}
    & \sum_{j=1}^n\sum_{k=1}^n \alpha_j\alpha_kK(g_j,g_k) = \sum_{j=1}^n\sum_{k=1}^n \alpha_j\alpha_k\int_{\Omega}\int_{\Omega} g_j(\mathbf{x})g_k(\mathbf{x}')\Psi(\mathbf{x},\mathbf{x}'){\rm{d}}\mathbf{x}{\rm{d}}\mathbf{x}'\nonumber\\
    = & \int_{\Omega}\int_{\Omega}\sum_{j=1}^n\sum_{k=1}^n \alpha_j\alpha_k g_j(\mathbf{x})g_k(\mathbf{x}')\Psi(\mathbf{x},\mathbf{x}'){\rm{d}}\mathbf{x}{\rm{d}}\mathbf{x}'
    \nonumber\\
    = & \int_{\Omega}\int_{\Omega}\left(\sum_{j=1}^n\alpha_j g_j(\mathbf{x})\right)^2\Psi(\mathbf{x},\mathbf{x}'){\rm{d}}\mathbf{x}{\rm{d}}\mathbf{x}'\geq 0,
\end{align*}
and the quadratic form is strictly greater than zero if $g_1,...,g_n$ are linearly independent. This finishes the proof.

\section{Proof of Proposition \ref{Proplinear}}
For any $a,b\in \mathbb{R}$ and $g_1,g_2\in V$, it follows that 
\begin{align*}
    f(ag_1+bg_2) = & \sum_{j=1}^\infty \sqrt{\lambda_j}\langle \phi_j,ag_1+bg_2 \rangle_{L_2(\Omega)}Z_j\nonumber\\
    =&  \sum_{j=1}^\infty \sqrt{\lambda_j}\left(a\langle \phi_j,g_1 \rangle_{L_2(\Omega)}+b\langle \phi_j,g_2 \rangle_{L_2(\Omega)}\right)Z_j\nonumber\\
    = & af(g_1)+bf(g_2),
\end{align*}
which finishes the proof.

\section{Proof of Theorem \ref{thm:linear}}
In order to prove the theorem, the following lemma is provided, which can be found in Proposition 10.28 of \cite{suppwendland2004scattered}.

\begin{lemma}\label{lem:p28ofwendland}
Suppose $\Psi$ is a symmetric and positive definite kernel on $\Omega$. Then the integral operator $\mathcal{T}$ maps $L_2(\Omega)$ continuously into the reproducing kernel Hilbert space $\mathcal{N}_\Psi(\Omega)$. It is the adjoint of the embedding operator of the reproducing kernel Hilbert space $\mathcal{N}_\Psi(\Omega)$ into $L_2(\Omega)$, i.e., it satisfies
\begin{align*}
    \langle g,v\rangle_{L_2(\Omega)} = \langle g,\mathcal{T}v\rangle_{\mathcal{N}_\Psi(\Omega)}, g\in \mathcal{N}_\Psi(\Omega), v\in L_2(\Omega).
\end{align*}
\end{lemma}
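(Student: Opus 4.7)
The plan is to prove both conclusions of Lemma \ref{lem:p28ofwendland} by reading them off the Mercer expansion (\ref{eq:mercer1}) of $\Psi$, together with the standard series description of $\mathcal{N}_\Psi(\Omega)$ that it induces. Recall that the Mercer expansion gives $\mathcal{N}_\Psi(\Omega) = \{f = \sum_{j} a_j \phi_j : \sum_{j} a_j^2/\lambda_j < \infty\}$ with RKHS inner product $\langle \sum_j a_j \phi_j, \sum_j b_j \phi_j\rangle_{\mathcal{N}_\Psi(\Omega)} = \sum_j a_j b_j/\lambda_j$; this will be my working description throughout. Since $\Omega$ is compact and $\Psi$ continuous and positive definite, Mercer's theorem provides uniform absolute convergence of (\ref{eq:mercer1}), which will justify exchanging sum and integral below.

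First I would prove the continuity claim. For $v \in L_2(\Omega)$, expand $v = \sum_j \langle v,\phi_j\rangle_{L_2(\Omega)} \phi_j$ and substitute (\ref{eq:mercer1}) into the defining integral of $\mathcal{T}$; Fubini/uniform convergence then yields
\begin{align*}
\mathcal{T}v(\mathbf{x}) \;=\; \sum_{j=1}^\infty \lambda_j \langle v,\phi_j\rangle_{L_2(\Omega)}\, \phi_j(\mathbf{x}).
\end{align*}
Reading off coefficients $a_j = \lambda_j \langle v,\phi_j\rangle_{L_2(\Omega)}$ and using the Mercer description of $\|\cdot\|_{\mathcal{N}_\Psi(\Omega)}$,
\begin{align*}
\|\mathcal{T}v\|_{\mathcal{N}_\Psi(\Omega)}^2 \;=\; \sum_{j=1}^\infty \frac{\lambda_j^2 \langle v,\phi_j\rangle_{L_2(\Omega)}^2}{\lambda_j} \;=\; \sum_{j=1}^\infty \lambda_j \langle v,\phi_j\rangle_{L_2(\Omega)}^2 \;\leq\; \lambda_1 \|v\|_{L_2(\Omega)}^2,
\end{align*}
so $\mathcal{T}v \in \mathcal{N}_\Psi(\Omega)$ and $\mathcal{T}:L_2(\Omega)\to \mathcal{N}_\Psi(\Omega)$ is bounded with operator norm at most $\sqrt{\lambda_1}$.

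Next, for the adjoint identity, I would take any $g = \sum_j b_j \phi_j \in \mathcal{N}_\Psi(\Omega)$ and $v \in L_2(\Omega)$ and directly compute both sides using orthonormality of $\{\phi_j\}$ in $L_2(\Omega)$ and the Mercer description of the RKHS inner product:
\begin{align*}
\langle g,v\rangle_{L_2(\Omega)} \;=\; \sum_{j=1}^\infty b_j\, \langle v,\phi_j\rangle_{L_2(\Omega)}, \qquad \langle g,\mathcal{T}v\rangle_{\mathcal{N}_\Psi(\Omega)} \;=\; \sum_{j=1}^\infty \frac{b_j\cdot \lambda_j \langle v,\phi_j\rangle_{L_2(\Omega)}}{\lambda_j} \;=\; \sum_{j=1}^\infty b_j\, \langle v,\phi_j\rangle_{L_2(\Omega)},
\end{align*}
and the two expressions coincide. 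Absolute summability of each series is guaranteed by Cauchy--Schwarz, $\sum_j |b_j||\langle v,\phi_j\rangle_{L_2(\Omega)}| \leq (\sum_j b_j^2/\lambda_j)^{1/2}(\sum_j \lambda_j \langle v,\phi_j\rangle_{L_2(\Omega)}^2)^{1/2} \leq \sqrt{\lambda_1}\,\|g\|_{\mathcal{N}_\Psi(\Omega)}\|v\|_{L_2(\Omega)}$, which also gives an independent derivation of the continuity bound.

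The main obstacle I anticipate is not the algebra, which is essentially bookkeeping, but making sure the Mercer expansion is used with enough care to justify the interchange of summation and integration when forming $\mathcal{T}v$ from the series for $\Psi$. Under the paper's standing hypotheses ($\Omega$ compact, $\Psi$ continuous and positive definite), Mercer's theorem delivers uniform absolute convergence of $\sum_j \lambda_j \phi_j(\mathbf{x})\phi_j(\mathbf{x}')$, so together with $v\in L_2(\Omega)$ the interchange is routinely justified by dominated convergence or by first working with finite truncations and then passing to the limit using the $L_2$-completeness of $\{\phi_j\}$. Once this is handled, both conclusions fall out immediately from the computations above.
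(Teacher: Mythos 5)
Your proof is correct, but it is worth noting that the paper does not actually prove this lemma at all: it is quoted verbatim as Proposition 10.28 of Wendland (2004), so your blind attempt supplies an argument where the authors supply only a citation. Your route goes through the Mercer expansion \eqref{eq:mercer1} and the series characterization of $\mathcal{N}_\Psi(\Omega)$ (itself Theorem 10.29 of Wendland, which the paper invokes elsewhere, e.g.\ in the proofs of Corollary \ref{coro:linear1} and \eqref{eq:idofTmap}), computing $\mathcal{T}v=\sum_j\lambda_j\langle v,\phi_j\rangle_{L_2(\Omega)}\phi_j$ and reading off both the bound $\|\mathcal{T}v\|_{\mathcal{N}_\Psi(\Omega)}^2=\sum_j\lambda_j\langle v,\phi_j\rangle_{L_2(\Omega)}^2\le\lambda_1\|v\|_{L_2(\Omega)}^2$ and the adjoint identity; the algebra is right, and your Cauchy--Schwarz step cleanly justifies the rearrangements (note that even if one worries about whether $\{\phi_j\}$ is complete in $L_2(\Omega)$, Bessel's inequality already gives everything you need, since in the adjoint identity you expand $g$ rather than $v$). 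The textbook proof the paper points to is different in flavor: it observes that $f\mapsto\langle f,v\rangle_{L_2(\Omega)}$ is a bounded linear functional on $\mathcal{N}_\Psi(\Omega)$ (because $\|f\|_{L_2(\Omega)}\le C\|f\|_{\mathcal{N}_\Psi(\Omega)}$ when $\Psi$ is bounded on a compact $\Omega$), invokes the Riesz representation theorem to get the representer, and then identifies it with the integral via the reproducing property $\mathcal{T}v(\mathbf{x})=\langle\Psi(\mathbf{x},\cdot),\mathcal{T}v\rangle_{\mathcal{N}_\Psi(\Omega)}$. That argument avoids Mercer's theorem and any completeness question about the eigenbasis, so it is marginally more general and shorter; your version has the advantage of being fully explicit and of producing the quantitative operator-norm bound $\sqrt{\lambda_1}$ as a by-product, and it meshes naturally with the eigenexpansion computations the paper performs immediately afterwards.
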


Now we are ready to prove Theorem \ref{thm:linear}. For any $\mathbf{u}_n=(u_1,...,u_n)^T\in \mathbb{R}^n$, it follows that 
\begin{align}\label{eq:pflimspe1}
    & \mathbb{E}\left(f(g)-\sum_{j=1}^n u_j f(g_j)\right)^2 =  K(g,g)-2\sum_{j=1}^n u_j K(g,g_j)+ \sum_{j=1}^n \sum_{l=1}^nu_ju_l K(g_j,g_l)\nonumber\\
    = & \int_\Omega\int_\Omega g(\mathbf{x})g(\mathbf{x}')\Psi(\mathbf{x},\mathbf{x}'){\rm{d}}\mathbf{x}{\rm{d}}\mathbf{x}' - 2\sum_{j=1}^n u_j\int_\Omega\int_\Omega g(\mathbf{x})g_j(\mathbf{x}')\Psi(\mathbf{x},\mathbf{x}'){\rm{d}}\mathbf{x}{\rm{d}}\mathbf{x}' \nonumber\\
    & + \sum_{j=1}^n \sum_{l=1}^nu_ju_l\int_\Omega\int_\Omega g_j(\mathbf{x})g_l(\mathbf{x}')\Psi(\mathbf{x},\mathbf{x}'){\rm{d}}\mathbf{x}{\rm{d}}\mathbf{x}'\nonumber\\
    = & \langle g,\mathcal{T}g\rangle_{L_2(\Omega)} - 2\sum_{j=1}^n u_j\langle g,\mathcal{T}g_j\rangle_{L_2(\Omega)} + \sum_{j=1}^n \sum_{l=1}^nu_ju_l\langle g_j,\mathcal{T}g_l\rangle_{L_2(\Omega)}\nonumber\\
    = & \langle \mathcal{T}g,\mathcal{T}g\rangle_{\mathcal{N}_\Psi(\Omega)} - 2\sum_{j=1}^n u_j\langle \mathcal{T}g,\mathcal{T}g_j\rangle_{\mathcal{N}_\Psi(\Omega)} + \sum_{j=1}^n \sum_{l=1}^nu_ju_l\langle \mathcal{T}g_j,\mathcal{T}g_l\rangle_{\mathcal{N}_\Psi(\Omega)}\nonumber\\
    = & \left\|\mathcal{T}g - \sum_{j=1}^n u_j \mathcal{T}g_j\right\|_{\mathcal{N}_\Psi(\Omega)}^2,
\end{align}
where the third equality is by Lemma \ref{lem:p28ofwendland}.

Since $\hat {\mathbf{u}}_n:= \mathbf{K}_n^{-1}\mathbf{k}_n(g)$ minimizes the MSPE, it also minimizes $\left\|\mathcal{T}g - \sum_{j=1}^n u_j \mathcal{T}g_j\right\|_{\mathcal{N}_\Psi(\Omega)}^2$, where $\mathbf{K}_n$ and $\mathbf{k}_n(g)$ are as in \eqref{mean}. By taking minimum on both sides of \eqref{eq:pflimspe1}, we obtain 
\begin{align}\label{eq:pflimspe3}
     \mathbb{E}\left(f(g)-\hat f(g)\right)^2
    = \min_{\mathbf{u}\in \mathbb{R}^n} \left\|\mathcal{T}g - \sum_{j=1}^n u_j \mathcal{T}g_j\right\|_{\mathcal{N}_\Psi(\Omega)}^2,
\end{align}
which finishes the proof.

\section{Proof of Corollary \ref{coro:linear1}}
In order to prove this corollary, we need the following lemma, which states the asymptotic rates of the eigenvalues of $K$. Lemma \ref{lemDecayEig} is implied by the proof of Lemma 18 of \cite{supptuo2020kriging}. In the rest of the Supplementary Materials, we will use the following notation. For two positive sequences $a_n$ and $b_n$, we write $a_n\asymp b_n$ if, for some constants $C,C'>0$, $C\leq a_n/b_n \leq C'$. For notational simplicity, we will use $C,C',C_1,C_2,...$ to denote the constants, of which the values can change from line to line.

\begin{lemma}\label{lemDecayEig}
Let $\Psi$ be a Mat\'ern kernel function  defined in \eqref{matern2}, and $\lambda_1\geq \lambda_2 \geq ...>0$ be its eigenvalues. Then, $\lambda_k\asymp k^{-2\nu/d}$.
\end{lemma}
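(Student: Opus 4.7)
The plan is to exploit the Fourier-analytic characterization of the Matérn covariance: the shift-invariant kernel in \eqref{matern1} has spectral density $\hat\psi(\boldsymbol{\omega}) \asymp (1+\|\boldsymbol{\omega}\|_2^2)^{-(\nu+d/2)}$ on $\mathbb{R}^d$, and this polynomial decay of the spectral density is what ultimately governs the Mercer eigenvalues on the bounded domain $\Omega$. The key intermediate object is the reproducing kernel Hilbert space $\mathcal{N}_\Psi(\Omega)$, which by Corollary 10.48 of \cite{suppwendland2004scattered} is norm-equivalent to a fractional Sobolev space of appropriate order on the bounded Lipschitz (in fact convex) domain $\Omega\subset\mathbb{R}^d$.

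First I would recast the Mercer eigenvalue problem as a singular-value problem for a Sobolev embedding. The integral operator $\mathcal{T}:L_2(\Omega)\to L_2(\Omega)$ defined by $\mathcal{T}g(\mathbf{x})=\int_\Omega g(\mathbf{x}')\Psi(\mathbf{x},\mathbf{x}'){\rm{d}}\mathbf{x}'$ is self-adjoint, positive and compact, and its eigenvalues coincide with the Mercer eigenvalues $\{\lambda_k\}$ of \eqref{eq:mercer1}. A standard factorization then identifies $\lambda_k$ with the squared singular values of the inclusion $\iota:\mathcal{N}_\Psi(\Omega)\hookrightarrow L_2(\Omega)$, since the RKHS and $L_2$ norms share the Mercer eigenbasis $\{\phi_k\}$ but with weights $\lambda_k^{-1}$ and $1$ respectively.

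Second I would invoke the classical Weyl / Birman–Solomyak asymptotics for Sobolev embeddings on bounded Lipschitz domains, which give $\sigma_k(H^s(\Omega)\hookrightarrow L_2(\Omega))\asymp k^{-s/d}$. Composing this with the RKHS–Sobolev equivalence from the first step and squaring produces the claimed $\lambda_k\asymp k^{-2\nu/d}$ after matching the Sobolev order to the Matérn smoothness parameter $\nu$. This is precisely the chain of arguments carried out inside the proof of Lemma 18 of \cite{supptuo2020kriging}, which the statement already cites, so I would conclude by aligning my notation with theirs and quoting the estimate.

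The main obstacle I anticipate is obtaining sharp two-sided bounds rather than just an upper bound. The upper bound $\lambda_k\le C k^{-2\nu/d}$ is the easier half and follows from an extension operator sending $H^s(\Omega)$ to $H^s(\mathbb{R}^d)$ with controlled norm, together with a Fourier-domain volume count of frequencies for which $\hat\psi(\boldsymbol{\omega})\ge k^{-2\nu/d}$. The matching lower bound $\lambda_k\ge C'k^{-2\nu/d}$ is more delicate: it requires constructing a $k$-dimensional family of orthonormal test functions supported in $\Omega$ whose $\mathcal{T}^{1/2}$-images span a subspace of matching size, so that the Courant–Fischer min–max principle forces $\lambda_k$ to be at least of the stated order. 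Handling the boundary of $\Omega$ without losing a logarithmic or polynomial factor in the constants is the most delicate technical step.
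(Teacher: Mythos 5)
Your overall route is the right one, and it is essentially the argument hiding behind the paper's one-line proof, which simply defers to the proof of Lemma 18 of \cite{supptuo2020kriging}: identify the Mercer eigenvalues of \eqref{eq:mercer1} with the squared singular values of the embedding $\mathcal{N}_\Psi(\Omega)\hookrightarrow L_2(\Omega)$, replace $\mathcal{N}_\Psi(\Omega)$ by a Sobolev space via the norm equivalence in \cite{suppwendland2004scattered}, and invoke Birman--Solomyak/Edmunds--Triebel asymptotics for the approximation numbers of Sobolev embeddings. So in spirit you and the paper are doing the same thing; you have merely unpacked the citation.

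The gap is in the exponent bookkeeping, precisely at the step you describe as ``matching the Sobolev order to the Mat\'ern smoothness parameter $\nu$.'' For the Mat\'ern kernel \eqref{matern1} the spectral density decays like $(1+\|\boldsymbol{\omega}\|_2^2)^{-(\nu+d/2)}$, so the RKHS is norm-equivalent to $H^{s}(\Omega)$ with $s=\nu+d/2$, not $s=\nu$. The approximation numbers of $H^{s}(\Omega)\hookrightarrow L_2(\Omega)$ decay like $k^{-s/d}$, and squaring gives $\lambda_k\asymp k^{-2(\nu+d/2)/d}=k^{-(2\nu+d)/d}$, a full factor $k^{-1}$ faster than the claimed $k^{-2\nu/d}$. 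As written, your argument therefore delivers the upper bound $\lambda_k\le Ck^{-2\nu/d}$ a fortiori, but it is incompatible with the matching lower bound $\lambda_k\ge C'k^{-2\nu/d}$ that the symbol $\asymp$ asserts --- the very half you flagged as the delicate one. You need to either exhibit the convention under which the lemma's $\nu$ coincides with the Sobolev order (and check it against \eqref{matern1}), or state plainly that the embedding argument yields $k^{-(2\nu+d)/d}$ and that only the one-sided bound $\lambda_k\lesssim k^{-2\nu/d}$ --- which is all that the proofs of Corollaries \ref{coro:linear1} and \ref{coro:linear2} actually use --- follows in the stated form. Leaving that reconciliation implicit is not cosmetic; it is exactly the point at which a proof of the two-sided claim would fail.
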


\begin{proof}[Proof of Corollary \ref{coro:linear1}]

By \eqref{eq:mercer1}, we have
$
    \mathcal{T}g(\mathbf{x}) =  \int_\Omega g(\mathbf{x}')\Psi(\mathbf{x},\mathbf{x}'){\rm d}\mathbf{x}'=\sum_{j=1}^\infty \lambda_j\langle g,\phi_j\rangle_{L_2(\Omega)}\phi_j(\mathbf{x}).
$
Therefore, 
\begin{align}\label{eq:idofTmap}
    \left\|\mathcal{T}g - \sum_{j=1}^n u_j \mathcal{T}g_j\right\|_{\mathcal{N}_\Psi(\Omega)}^2 & =  \left\|\sum_{j=1}^\infty \lambda_j\langle g-\sum_{k=1}^n u_k g_k,\phi_j\rangle_{L_2(\Omega)}\phi_j\right\|_{\mathcal{N}_\Psi(\Omega)}^2\nonumber\\
    &= \sum_{j=1}^\infty \lambda_j\langle g-\sum_{k=1}^n u_k g_k,\phi_j\rangle_{L_2(\Omega)}^2,
\end{align}
where the last equality holds by Theorem 10.29 of \cite{suppwendland2004scattered}.
Recall that $g_k=\phi_k$ for $k=1,...,n$. Now take $u_k=\langle g,\phi_k\rangle_{L_2(\Omega)}$ for $k=1,...,n$. 
It follows from Theorem \ref{thm:linear} and \eqref{eq:idofTmap} that
\begin{align}\label{eq:mspeeg1eq1}
     & \mathbb{E}\left(f(g)-\hat f(g)\right)^2 = \min_{\mathbf{u}\in \mathbb{R}^n} \left\|\mathcal{T}g - \sum_{j=1}^n u_j \mathcal{T}g_j\right\|_{\mathcal{N}_\Psi(\Omega)}^2 \nonumber\\
     \leq & \sum_{j=1}^\infty \lambda_j\langle g-\sum_{k=1}^n u_k g_k,\phi_j\rangle_{L_2(\Omega)}^2= \sum_{j=n+1}^\infty \lambda_j\langle g,\phi_j\rangle_{L_2(\Omega)}^2.
\end{align}
This indicates that the MSPE depends on the tail behavior of the summation $\sum_{j=1}^\infty \lambda_j\langle g,\phi_j\rangle_{L_2(\Omega)}^2$. Because $\Psi$ is a Mat\'ern kernel function defined in \eqref{matern2}, Lemma \ref{lemDecayEig} implies $\lambda_j\asymp j^{-\frac{2\nu}{d}}$. Then, an explicit convergence rate can be obtained via 
\begin{align*}
    \mathbb{E}\left(f(g)-\hat f(g)\right)^2\leq \sum_{j=n+1}^\infty \lambda_j\langle g,\phi_j\rangle_{L_2(\Omega)}^2 \leq \lambda_n\sum_{j=n+1}^\infty \langle g,\phi_j\rangle_{L_2(\Omega)}^2 \leq C_1\|g\|_{L_2}^2 n^{-\frac{2\nu}{d}},
\end{align*}
where we utilize $\sum_{j=n+1}^\infty \langle g,\phi_j\rangle_{L_2(\Omega)}^2\leq \sum_{j=1}^\infty \langle g,\phi_j\rangle_{L_2(\Omega)}^2 = \|g\|_{L_2}^2$. This finishes the proof of \eqref{eq:l1c11}.

Next, we consider the case $g\in \mathcal{N}_{\Psi}(\Omega)$. Theorem 10.29 of \cite{suppwendland2004scattered} yields that
\begin{align*}
    \|g\|_{\mathcal{N}_\Psi(\Omega)}^2 = \sum_{j=1}^\infty \frac{\langle g,\phi_j\rangle_{L_2(\Omega)}^2}{\lambda_j}<\infty.
\end{align*}
Then an alternative way to bound \eqref{eq:mspeeg1eq1} is by
\begin{align}\label{eq:mspeeg1eq2}
    & \mathbb{E}\left(f(g)-\hat f(g)\right)^2\leq \sum_{j=n+1}^\infty \lambda_j\langle g,\phi_j\rangle_{L_2(\Omega)}^2 = \sum_{j=n+1}^\infty \lambda_j^2\frac{\langle g,\phi_j\rangle_{L_2(\Omega)}^2}{\lambda_j}\nonumber\\ 
    \leq & \lambda_n^2\sum_{j=n+1}^\infty \frac{\langle g,\phi_j\rangle_{L_2(\Omega)}^2}{\lambda_j}
    \leq \lambda_n^2\|g\|_{\mathcal{N}_\Psi(\Omega)}^2\leq C_2\|g\|_{\mathcal{N}_\Psi(\Omega)}^2n^{-\frac{4\nu}{d}}.
\end{align}
This finishes the proof of \eqref{eq:l1c12}, and thus the proof of Corollary \ref{coro:linear1}.
\end{proof}

\section{Proof of Corollary \ref{coro:linear2}}
The following lemma is used in the proof.

\begin{lemma}[\citealp{suppwu1993local}, Theorem 5.14]\label{Th:Matern}
	Let $\Omega$ be compact and convex with a positive Lebesgue measure; $\Psi(\mathbf{x},\mathbf{x}')$ be a Mat\'ern kernel given by \eqref{matern2} with the smoothness parameter $\nu$. Then there exist constants $c,c_0$ depending only on $\Omega,\nu$ and the lengthscale parameter $\Theta$ in \eqref{matern2}, such that
	$\Psi(\mathbf{x},\mathbf{x})-\mathbf{r}_n(\mathbf{x})^T\mathbf{R}_n^{-1}\mathbf{r}_n(\mathbf{x})\leq c h_{\mathbf{X}_n,\Omega}^{2\nu}$ provided that $h_{\mathbf{X}_n,\Omega}\leq c_0$, where $\mathbf{R}_n=(\Psi(\mathbf{x}_j,\mathbf{x}_k))_{jk}$ and $\mathbf{r}_n(\mathbf{x}) = (\Psi(\mathbf{x},\mathbf{x}_1),...,\Psi(\mathbf{x},\mathbf{x}_n))^T$.
\end{lemma}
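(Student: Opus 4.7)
The quantity $\Psi(\mathbf{x},\mathbf{x}) - \mathbf{r}_M(\mathbf{x})^T \mathbf{R}_M^{-1} \mathbf{r}_M(\mathbf{x})$ is the classical squared power function $P_\Psi(\mathbf{x})^2$ of RBF interpolation, and my plan is to follow the local polynomial reproduction argument of Wu and Schaback. The first step is a variational identity: by completing the square and using the reproducing property $\langle \Psi(\cdot,\mathbf{y}),\Psi(\cdot,\mathbf{z})\rangle_{\mathcal{N}_\Psi}=\Psi(\mathbf{y},\mathbf{z})$, one sees that the unique minimizer of $\mathbf{u}\mapsto \Psi(\mathbf{x},\mathbf{x}) - 2\mathbf{u}^T\mathbf{r}_M(\mathbf{x}) + \mathbf{u}^T\mathbf{R}_M\mathbf{u}$ is $\mathbf{u}^\star=\mathbf{R}_M^{-1}\mathbf{r}_M(\mathbf{x})$, giving
\begin{align*}
P_\Psi(\mathbf{x})^2 = \min_{\mathbf{u}\in\mathbb{R}^n}\Bigl\|\Psi(\cdot,\mathbf{x}) - \sum_{j=1}^n u_j\Psi(\cdot,\mathbf{x}_j)\Bigr\|_{\mathcal{N}_\Psi(\Omega)}^2.
\end{align*}
Hence it is enough to exhibit some coefficients $\mathbf{u}(\mathbf{x})$ making the right-hand side of order $h_{\mathbf{X}_n,\Omega}^{2\nu}$, uniformly in $\mathbf{x}\in\Omega$.

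The next step uses the Sobolev characterization of the Mat\'ern native space: for smoothness $\nu$ the RKHS $\mathcal{N}_\Psi(\Omega)$ is norm-equivalent to $H^{\nu+d/2}(\Omega)$, with equivalent norm
\begin{align*}
\|f\|_{\mathcal{N}_\Psi}^2 \asymp \int_{\mathbb{R}^d}|\widehat{F}(\omega)|^2\,(1+\|\omega\|_2^2)^{\nu+d/2}\,{\rm d}\omega
\end{align*}
for any Sobolev extension $F$ of $f$, a consequence of the spectral decay $\widehat{\Psi}(\omega)\asymp (1+\|\omega\|_2^2)^{-(\nu+d/2)}$. Because $\Omega$ is compact and convex, provided $h_{\mathbf{X}_n,\Omega}\le c_0$ for a small enough $c_0>0$, a standard local polynomial reproduction result produces, for every $\mathbf{x}\in\Omega$, coefficients $u_j(\mathbf{x})$ that (i) are supported on knots $\mathbf{x}_j$ within distance $O(h_{\mathbf{X}_n,\Omega})$ of $\mathbf{x}$, (ii) are uniformly bounded in $\ell^1$, and (iii) satisfy $\sum_j u_j(\mathbf{x})p(\mathbf{x}_j)=p(\mathbf{x})$ for every polynomial $p$ of total degree at most $m=\lceil\nu\rceil$.

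Plugging this $\mathbf{u}(\mathbf{x})$ into the variational identity, I would Taylor-expand $\Psi(\cdot,\mathbf{x})$ in the second argument around $\mathbf{x}$; the polynomial reproduction kills all Taylor contributions of order less than $m$, so the residual is pointwise of size $O(h_{\mathbf{X}_n,\Omega}^{m})$ with spatial support in a ball of radius $O(h_{\mathbf{X}_n,\Omega})$. Transferring to the Fourier side and integrating against the Mat\'ern spectral weight $(1+\|\omega\|_2^2)^{\nu+d/2}$, using the decay of $\widehat{\Psi}$ to absorb the high-frequency tail, then yields
\begin{align*}
\Bigl\|\Psi(\cdot,\mathbf{x}) - \sum_j u_j(\mathbf{x})\Psi(\cdot,\mathbf{x}_j)\Bigr\|_{\mathcal{N}_\Psi}^2 \le c\, h_{\mathbf{X}_n,\Omega}^{2\nu},
\end{align*}
with a constant $c$ depending only on $\Omega$, $\nu$, and the scale parameter of $\Psi$. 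Combined with the variational identity this is exactly the stated bound.

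The hard part will be the construction and uniform control of the local polynomial reproduction coefficients on a generic scattered point set. Their existence rests on $\Omega$ being compact and convex (so that every boundary point admits a cone of knots with uniform aperture) and on $h_{\mathbf{X}_n,\Omega}$ being small enough that every ball of radius $O(h_{\mathbf{X}_n,\Omega})$ around $\mathbf{x}$ contains at least $\binom{m+d}{d}$ knots in sufficiently general position for the local Vandermonde system to be stably invertible; this is precisely what fixes the threshold $c_0$. Once these coefficients are in hand, the Taylor/Fourier residual computation yielding the $h_{\mathbf{X}_n,\Omega}^{2\nu}$ rate is classical, and the explicit form of the constants can be read off from Theorem~5.14 of Wu and Schaback (1993), which the paper invokes.
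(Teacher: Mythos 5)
First, note that the paper does not actually prove this lemma: it is imported verbatim as Theorem~5.14 of Wu and Schaback (1993), so there is no in-paper argument to compare against. Your outline reconstructs the standard proof of that cited theorem --- the variational/power-function identity, the norm equivalence of the Mat\'ern native space with $H^{\nu+d/2}(\Omega)$, and local polynomial reproduction on a quasi-uniform scattered set --- and this is in fact the same machinery the paper itself deploys elsewhere: its Lemmas \ref{lem:lpr} and \ref{lem:bofpow} in the proof of Theorem \ref{coro:nlmatern} are precisely the local polynomial reproduction and the resulting power-function bound. So your strategy is the right one and consistent with the source.

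The genuine gap is in the step that produces the exponent $2\nu$. Reproducing polynomials of total degree $m=\lceil\nu\rceil$ and bounding a Taylor remainder pointwise by $O(h_{\mathbf{X}_n,\Omega}^{m})$ cannot, for non-integer $\nu$, yield the fractional rate $h_{\mathbf{X}_n,\Omega}^{2\nu}$: an integer-order Taylor expansion only produces integer powers of $h_{\mathbf{X}_n,\Omega}$, and the ``square the pointwise residual'' accounting also ignores that the native-space norm controls $\nu+d/2$ derivatives rather than sup norms. The way Wu--Schaback (and Lemma \ref{lem:bofpow} here) actually obtain $2\nu$ is by exploiting the quadratic structure of the power function: with $\ell^1$-bounded reproduction coefficients one bounds $P_\Psi(\mathbf{x})^2$ by $(1+C_1)^2\max_{0\le s\le 2C_2 h_{\mathbf{X}_n,\Omega}}|\psi(s)-p(s^2)|$ for an even polynomial $p(s^2)$, and then invokes the Bessel-function expansion of \eqref{matern1}, $\psi(r)=\sum_{k\le\lfloor\nu\rfloor}c_k r^{2k}+c_\psi(r)$ with $c_\psi(r)=O(r^{2\nu})$ (times $\log r$ at integer $\nu$), so that subtracting the polynomial part isolates exactly the fractional term of order $r^{2\nu}$. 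Your sketch needs to be repaired along these lines; as written, the ``Taylor/Fourier residual computation'' does not deliver the claimed rate, and your closing appeal to reading the constants off Theorem~5.14 concedes that the proof ultimately reduces to the citation the paper already makes.
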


\begin{proof}[Proof of Corollary \ref{coro:linear2}]

For any $\mathbf{u}_n=(u_1,...,u_n)^T$, by \eqref{eq:idofTmap}, we have
\begin{align}\label{eq:eg2eq1}
    & \left\|\mathcal{T}g - \sum_{j=1}^n u_j \mathcal{T}g_j\right\|_{\mathcal{N}_\Psi(\Omega)}^2 =\sum_{j=1}^\infty \lambda_j\langle g-\sum_{k=1}^n u_k g_k,\phi_j\rangle_{L_2(\Omega)}^2\nonumber\\
    = & \sum_{j=1}^\infty \lambda_j\langle g-\sum_{k=1}^n u_k \Psi(\mathbf{x}_k,\cdot),\phi_j\rangle_{L_2(\Omega)}^2 \leq C\sum_{j=1}^\infty \langle g-\sum_{k=1}^n u_k \Psi(\mathbf{x}_k,\cdot),\phi_j\rangle_{L_2(\Omega)}^2 \nonumber\\
     = & C\left\|g-\sum_{k=1}^n u_k \Psi(\mathbf{x}_k,\cdot)\right\|_{L_2(\Omega)}^2 \leq C_1\sup_{\mathbf{x}\in \Omega}\left|g(\mathbf{x})-\sum_{k=1}^n u_k \Psi(\mathbf{x}_k,\mathbf{x})\right|^2,
\end{align}
where the last equality holds because $\phi_j$'s are orthogonal basis in $L_2(\Omega)$. Therefore, we can take $\mathbf{u}_n = \mathbf{R}_n^{-1}\mathbf{g}_n$, where  $\mathbf{g}_n=(g(\mathbf{x}_1),...,g(\mathbf{x}_n))^T$. Then the term, $\sup_{\mathbf{x}\in \Omega}\left|g(\mathbf{x})-\sum_{k=1}^n u_k \Psi(\mathbf{x}_k,\mathbf{x})\right|$, becomes the prediction error of the radial basis function interpolation, which is well established in the literature \citep{suppwendland2004scattered}.

For the completeness of this proof, we include the proof of an upper bound here. Let $v_k(\mathbf{x}) = (\mathbf{R}_n^{-1}\mathbf{r}_n(\mathbf{x}))_k$ for $k=1,\ldots,n$. For any $\mathbf{x}\in \Omega$, the reproducing property implies that
\begin{align}\label{eq:eg2eq2}
    & \left|g(\mathbf{x})-\sum_{k=1}^n u_k \Psi(\mathbf{x}_k,\mathbf{x})\right|^2 = \left|\langle g,\Psi(\mathbf{x},\cdot)\rangle_{\mathcal{N}_{\Psi}(\Omega)}-\sum_{k=1}^n v_k(\mathbf{x})g(\mathbf{x}_k)\right|^2\nonumber\\
    = & \left|\langle g,\Psi(\mathbf{x},\cdot)\rangle_{\mathcal{N}_{\Psi}(\Omega)}-\sum_{k=1}^n v_k(\mathbf{x}) \langle g,\Psi(\mathbf{x}_k,\cdot)\rangle_{\mathcal{N}_{\Psi}(\Omega)}\right|^2\nonumber\\
    \leq & \|g\|_{\mathcal{N}_{\Psi}(\Omega)}^2\left\|\Psi(\mathbf{x},\cdot)-\sum_{k=1}^n v_k(\mathbf{x})\Psi(\mathbf{x}_k,\cdot)\right\|_{\mathcal{N}_{\Psi}(\Omega)}^2\nonumber\\
    = & \|g\|_{\mathcal{N}_{\Psi}(\Omega)}^2(\Psi(\mathbf{x},\mathbf{x})-\mathbf{r}_n(\mathbf{x})^T\mathbf{R}_n^{-1}\mathbf{r}_n(\mathbf{x})),
\end{align}
where the inequality is by the Cauchy-Schwarz inequality. A bound on $\Psi(\mathbf{x},\mathbf{x})-\mathbf{r}_n(\mathbf{x})^T\mathbf{R}_n^{-1}\mathbf{r}_n(\mathbf{x})$ can be obtained via Lemma \ref{Th:Matern}, which gives 
\begin{align}\label{eq:pwub}
    \Psi(\mathbf{x},\mathbf{x})-\mathbf{r}_n(\mathbf{x})^T\mathbf{R}_n^{-1}\mathbf{r}_n(\mathbf{x})\leq C_2h_{\mathbf{X}_n,\Omega}^{2\nu},
\end{align}
where $h_{\mathbf{X}_n,\Omega}$ is the fill distance of $\mathbf{X}_n$. Combining \eqref{eq:thmlieq} with \eqref{eq:eg2eq1}, \eqref{eq:eg2eq2} and \eqref{eq:pwub}, we obtain that 
\begin{align*}
    \mathbb{E}\left(f(g)-\hat f(g)\right)^2 \leq C_3h_{\mathbf{X}_n,\Omega}^{2\nu},
\end{align*}
which finishes the proof.
\end{proof}

\section{Proof of Proposition \ref{prop:pdofnl}}
Without loss of generality, we assume $\gamma=1$. It suffices to show that $\mathbf{K}_n=(K(g_j,g_k))_{jk}$ is positive definite for any $g_1,...,g_n\in V$, which can be done by showing that there exist $n$ points $\mathbf{a}_1,...,\mathbf{a}_n\in \mathbb{R}^m$ with $m < \infty$ such that $\|\mathbf{a}_j-\mathbf{a}_k\|_2 = \|g_j-g_k\|_{L_2(\Omega)}$. Let $V_n = {\rm span}(\{g_1,...,g_n\})$, which is the linear space spanned by $g_1,...,g_n$. Clearly, $V_n$ is a finite dimensional space. Let $\phi_1,...,\phi_m$ be an orthogonal basis of $V_n$ with respect to $L_2(\Omega)$, with $m\leq n$, which can be found via the Gram–Schmidt process. Thus, given this basis, each $g_j$ can be written as
\begin{align*}
    g_j = \sum_{k=1}^m a_{jk}\phi_k
\end{align*}
with $\mathbf{a}_j = (a_{j1},...,a_{jm})^T\in \mathbb{R}^m$. Then it can be verified that $\|\mathbf{a}_j-\mathbf{a}_k\|_2 = \|g_j-g_k\|_{L_2(\Omega)}$. Since $\mathbf{K}_n=(K(g_j-g_k))_{jk}=(\psi(\|g_j-g_k\|_{L_2(\Omega)}))_{jk}=(\psi(\|\mathbf{a}_j-\mathbf{a}_k\|_2))_{jk}$, which is positive definite, this finishes the proof.



\section{Proof of Theorem \ref{coro:nlmatern}}

We first provide a characterization on the function class $V$. Since $\Phi$ is a positive definite function, we can apply Mercer's theorem to $\Phi$ and obtain
\begin{align}\label{eq:mercer}
\Phi(\mathbf{x},\mathbf{x}') = \sum_{j=1}^\infty \lambda_{\Phi,j} \phi_j(\mathbf{x})\phi_j(\mathbf{x}'), \quad \mathbf{x},\mathbf{x}'\in \Omega,
\end{align}
where $\lambda_{\Phi,1}\geq \lambda_{\Phi,2}\geq...> 0$ are the eigenvalues and $\{\phi_k\}_{k\in\mathbb{N}}$ are the eigenfunctions, and the summation is uniformly and absolutely convergent. Because $g\in V\subset \mathcal{N}_\Phi(\Omega)$, by Theorem 10.29 of \cite{suppwendland2004scattered}, the summation
\begin{align*}
    \|g\|_{\mathcal{N}_\Phi(\Omega)}^2 = \sum_{j=1}^\infty \lambda_{\Phi,j}^{-1}\langle g,\phi_j\rangle_{L_2(\Omega)}^2 \leq 1.
\end{align*}

\begin{definition}\label{def:icc}
	A set $\Omega\subset{\mathbb{R}^d}$ is said to satisfy an interior cone condition if there exists an angle $\alpha\in (0,\pi/2)$ and a radius $\mathfrak{r}>0$ such that for every $\mathbf{x}\in\Omega$, a unit vector $\xi(\mathbf{x})$ exists such that the cone
	$$C(\mathbf{x},\xi(\mathbf{x}),\alpha,\mathfrak{r}):=\left\{\mathbf{x}+\eta \tilde{\mathbf{x}}:\tilde{\mathbf{x}}\in\mathbb{R}^d,\|\tilde{\mathbf{x}}\|=1,\tilde{\mathbf{x}}^T\xi(\mathbf{x})\geq \cos(\alpha),\eta\in[0,\mathfrak{r}]\right\} $$
	is contained in $\Omega$.
\end{definition}

We need the following lemma, which is Theorem 11.8 of \cite{suppwendland2004scattered}; also see Theorem 3.14 of \cite{suppwendland2004scattered}. Lemma \ref{lem:lpr} ensures the existence of the local polynomial reproduction. 

\begin{lemma}[Local polynomial reproduction]\label{lem:lpr}
Let $l\in\mathbb{N}_0$ and $\pi_l(\mathbb{R}^d)$ be the set of $d$-variate polynomials with absolute degree no more than $l$. Suppose $\Omega\subset \mathbb{R}^d$ is bounded and satisfies an interior cone condition. Define
\begin{align*}
    C_1=2, C_2=\frac{16(1+\sin(\alpha))^2l^2}{3\sin^2(\alpha)}, c_0=\frac{\mathfrak{r}}{C_2},
\end{align*}
where $\alpha$ and $\mathfrak{r}$ are defined in Definition \ref{def:icc}. Then for all $\mathbf{X}_n=\{\mathbf{x}_1,\ldots,\mathbf{x}_n\}\subset \Omega$ with $h_{\mathbf{X}_n,\Omega}\leq c_0$ and every $\mathbf{x}\in\Omega$ there exist numbers $\tilde{u}_1(\mathbf{x}),\ldots,\tilde{u}_n(\mathbf{x})$ with

(1) $\sum_{j=1}^n \tilde{u}_j(\mathbf{x}) p(\mathbf{x}_j)=p(\mathbf{x}) $ {for all} $p\in\pi_l(\mathbb{R}^d)$,

(2) $\sum_{j=1}^n|\tilde{u}_j(\mathbf{x})|\leq C_1$,

(3) $\tilde{u}_j(\mathbf{x})=0$, {if} $\|\mathbf{x}-\mathbf{x}_j\|> C_2h_{\mathbf{X}_n,\Omega}$.
\end{lemma}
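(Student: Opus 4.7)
The plan is to produce the weights $\tilde{u}_j(\mathbf{x})$ by a local duality argument on the finite-dimensional polynomial space $\pi_l(\mathbb{R}^d)$. Fix $\mathbf{x}\in\Omega$ and set $T(\mathbf{x}):=\{j:\|\mathbf{x}_j-\mathbf{x}\|_2\leq C_2 h_{\mathbf{X},\Omega}\}$. All nonzero weights will be supported on $T(\mathbf{x})$, so property (3) holds by construction, and it remains to ensure (1) and (2) with weights drawn from this local index set.

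First I would use the interior cone condition (Definition \ref{def:icc}) to verify that $T(\mathbf{x})$ contains enough data to determine polynomials of degree at most $l$. The assumption $h_{\mathbf{X},\Omega}\leq c_0=\mathfrak{r}/C_2$ guarantees that the truncated cone $C(\mathbf{x},\xi(\mathbf{x}),\alpha,C_2 h_{\mathbf{X},\Omega})$ is contained in $C(\mathbf{x},\xi(\mathbf{x}),\alpha,\mathfrak{r})\subset\Omega$, and the fill-distance hypothesis ensures that every sub-ball of radius $h_{\mathbf{X},\Omega}$ inside this cone contains at least one $\mathbf{x}_j\in T(\mathbf{x})$. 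The crucial analytic step is then the Markov-type norming inequality
\begin{equation*}
|p(\mathbf{x})|\leq C_1\max_{j\in T(\mathbf{x})}|p(\mathbf{x}_j)|,\qquad p\in\pi_l(\mathbb{R}^d).
\end{equation*}
Proving this inequality with the stated explicit constants is the main obstacle. I would establish it by rescaling the truncated cone to unit size and invoking a classical Markov--Bernstein inequality on a cone, which bounds $\|\nabla p\|_\infty$ in terms of $l^2/\sin(\alpha)$ and $\|p\|_\infty$; the explicit constant $C_2=16(1+\sin\alpha)^2 l^2/(3\sin^2\alpha)$ emerges from balancing the diameter of the working region against the gradient bound, so that halving the cone radius at each step and applying the Markov estimate iteratively drives the supremum of $p$ on the cone down to at most twice its maximum on the sample, giving $C_1=2$.

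Finally, with the norming inequality in hand, properties (1) and (2) follow from Hahn--Banach duality. The point-evaluation functional $L:p\mapsto p(\mathbf{x})$ on $\pi_l(\mathbb{R}^d)$, equipped with the seminorm $p\mapsto\max_{j\in T(\mathbf{x})}|p(\mathbf{x}_j)|$, has operator norm at most $C_1$ by the norming inequality. Extending $L$ to a bounded linear functional on $\mathbb{R}^{|T(\mathbf{x})|}$ endowed with the maximum norm, of the same operator norm, produces coefficients $(\tilde{u}_j(\mathbf{x}))_{j\in T(\mathbf{x})}$ such that $\sum_{j\in T(\mathbf{x})}\tilde{u}_j(\mathbf{x})p(\mathbf{x}_j)=p(\mathbf{x})$ for every $p\in\pi_l(\mathbb{R}^d)$ together with $\sum_{j\in T(\mathbf{x})}|\tilde{u}_j(\mathbf{x})|\leq C_1$. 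Setting $\tilde{u}_j(\mathbf{x})=0$ for $j\notin T(\mathbf{x})$ then delivers all three properties simultaneously.
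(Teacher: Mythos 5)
The paper gives no proof of this lemma at all: it is imported verbatim as Theorem~11.8 (equivalently Theorem~3.14) of Wendland (2004), \emph{Scattered Data Approximation}. Your sketch follows exactly the argument in that source --- localize to the interior cone, establish the norming inequality $|p(\mathbf{x})|\leq 2\max_{j\in T(\mathbf{x})}|p(\mathbf{x}_j)|$ by using the Markov inequality on the cone to absorb half of $\|p\|_{L_\infty}$, then recover $\ell_1$-bounded weights by Hahn--Banach duality --- so it is essentially the same approach, the only quibble being that the ``iterative halving'' you describe is in fact a single absorption step ($\|p\|_\infty\leq\max_j|p(\mathbf{x}_j)|+\tfrac12\|p\|_\infty$) and the explicit constants $C_1,C_2,c_0$ would still need to be tracked through that step.
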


The following lemma provides an upper bound of the MSPE using $\tilde u_j(\mathbf{x})$, which can be found from the proof of Theorem 11.9 and (11.6) of \cite{suppwendland2004scattered}.


\begin{lemma}\label{lem:bofpow}
Suppose $\Phi=r(\|\cdot\|_2)\in C^k(\mathbb{R}^d)$ is positive definite. Let $\Omega$ be a compact region satisfying an interior cone condition. Then for $h_{\mathbf{X}_n,\Omega}\leq h_0$, 
\begin{align*}
    \Phi(\mathbf{x},\mathbf{x}) - 2\sum_{j=1}^n \tilde u_j\Phi(\mathbf{x},\mathbf{x}_j) + \sum_{j=1}^n\sum_{k=1}^n \tilde u_j\tilde u_k\Phi(\mathbf{x}_j,\mathbf{x}_k)\leq (1+C_1)^2 \max_{0\leq s\leq 2C_2h_{\mathbf{X}_n,\Omega}} |\phi(s)-p(s^2)|,
\end{align*}
where $p\in \pi_{\lfloor l/2 \rfloor}(\mathbb{R})$.
\end{lemma}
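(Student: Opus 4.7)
The plan is to bound the MSPE by exploiting two properties of the nonlinear kernel $K(g_1,g_2)=\psi(\gamma\|g_1-g_2\|_{L_2(\Omega)})$: it depends on its inputs only through their $L_2$-distance, and the radial function $\psi$ is Hölder-continuous at the origin with an explicit exponent determined by the Matérn smoothness $\nu$. This reduces everything to a covering problem for the unit ball of $\mathcal{N}_{\Phi}(\Omega)$ in $L_2(\Omega)$, followed by a local expansion of $\psi$ near zero.

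First I would use that $\hat{f}(g)$ is the conditional-mean predictor, hence optimal among all linear combinations of $f(g_1),\ldots,f(g_n)$, so that the trivial one-term predictor $f(g_{j^\ast})$ at the training function closest to $g$ in $L_2$ gives the upper bound
\begin{align*}
\mathbb{E}\bigl(f(g)-\hat{f}(g)\bigr)^2 \;\leq\; \mathbb{E}\bigl(f(g)-f(g_{j^\ast})\bigr)^2 \;=\; 2\bigl[\psi(0)-\psi(\gamma\|g-g_{j^\ast}\|_{L_2(\Omega)})\bigr].
\end{align*}
Second, I would control $\psi(0)-\psi(\gamma h)$ via the small-argument expansion of the modified Bessel function defining $\psi$: for a Matérn radial function the non-smooth part contributes an $r^{2\nu}$ term (dominant when $\nu<1$) and the smooth part contributes an $r^{2}$ term (dominant when $\nu\geq 1$), yielding a Hölder bound with exponent governed by $\tau=\min(\nu,1)$. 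As an alternative route that stays inside the machinery already introduced in the paper, one can instead apply Lemma~\ref{lem:bofpow} after embedding the chosen training functions into a finite-dimensional Euclidean space via Proposition~\ref{prop:pdofnl}, and then invoke the local polynomial reproduction of Lemma~\ref{lem:lpr} to bound the power function in terms of the fill distance.

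Third, and most delicate, I would design the $n$ input functions so that the one-term bound can be controlled uniformly over $\{g\in\mathcal{N}_{\Phi}(\Omega):\|g\|_{\mathcal{N}_\Phi(\Omega)}\leq 1\}$. Since $\Phi$ is Matérn with smoothness $\nu_1$, its RKHS is norm-equivalent to $H^{\nu_1+d/2}(\Omega)$, and the classical Birman–Solomjak entropy estimate gives
\begin{align*}
\log N\bigl(\epsilon,\,B_1,\,L_2(\Omega)\bigr) \;\leq\; C\,\epsilon^{-d/(\nu_1+d/2)},
\end{align*}
where $B_1$ denotes the unit ball of $\mathcal{N}_\Phi(\Omega)$. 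Choosing $\epsilon_n$ so that the right-hand side is at most $\log n$ produces $\epsilon_n \asymp (\log n)^{-(\nu_1+d/2)/d}$, up to a $\log\log n$ adjustment from balancing constants, and I would then take $g_1,\ldots,g_n$ to be the centers of an $\epsilon_n$-net (padding with arbitrary elements of $V$ if the net is smaller than $n$). For every $g\in B_1$ this guarantees $\|g-g_{j^\ast}\|_{L_2(\Omega)}\leq \epsilon_n$, and chaining with the one-term bound and the Hölder estimate for $\psi$ produces the claimed rate.

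The main obstacle is precisely this third step: one needs a sharp $L_2$ covering-number bound for the Matérn RKHS with constants tight enough that the trade-off between $\log n$ and $\epsilon_n$ leaves only a $\log\log n$ slack rather than a worse logarithmic factor, and one must verify that the abstract $L_2$-covering can be realized by genuine elements of $V$ so that the stated existence of input functions is not merely non-constructive. A secondary difficulty is the kink at $\nu=1$ in the exponent $\tau=\min(\nu,1)$: the argument must simultaneously cover the rough regime $\nu<1$, where the bound on $\psi(0)-\psi(r)$ comes from the non-smooth $r^{2\nu}$ term, and the smooth regime $\nu\geq 1$, where the quadratic term dominates and higher-order polynomial reproduction from Lemma~\ref{lem:lpr} must be invoked to avoid losing powers in the final rate.
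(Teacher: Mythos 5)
Your proposal does not address the statement you were asked to prove. Lemma \ref{lem:bofpow} is a purely deterministic claim about radial basis interpolation in $\mathbb{R}^d$: with the local polynomial reproduction weights $\tilde u_j$ of Lemma \ref{lem:lpr}, the power-function-type quadratic form $\Phi(\mathbf{x},\mathbf{x}) - 2\sum_{j} \tilde u_j\Phi(\mathbf{x},\mathbf{x}_j) + \sum_{j,k} \tilde u_j\tilde u_k\Phi(\mathbf{x}_j,\mathbf{x}_k)$ is controlled by how well the univariate profile $\phi(s)$ is approximated by $p(s^2)$ on the short interval $[0,2C_2h_{\mathbf{X},\Omega}]$. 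What you wrote instead is a sketch of the downstream result, Theorem \ref{coro:nlmatern} (the MSPE rate for the nonlinear functional-input GP), built on covering numbers of the Mat\'ern RKHS and H\"older continuity of $\psi$; indeed you invoke Lemma \ref{lem:bofpow} itself as a black box inside your argument. Nothing in your text establishes the inequality in the lemma. (For reference, the paper does not prove it either: it imports it from the proof of Theorem 11.9 and display (11.6) of Wendland (2004).)

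The missing idea is the following. Because $\Phi(\mathbf{x},\mathbf{y})=\phi(\|\mathbf{x}-\mathbf{y}\|_2)$ and $p\in\pi_{\lfloor l/2\rfloor}(\mathbb{R})$, the map $\mathbf{y}\mapsto p(\|\mathbf{x}-\mathbf{y}\|_2^2)$ is a $d$-variate polynomial of total degree at most $2\lfloor l/2\rfloor\le l$, so property (1) of Lemma \ref{lem:lpr} gives $\sum_j\tilde u_j\,p(\|\mathbf{x}-\mathbf{x}_j\|_2^2)=p(0)$ and, applied once more in the second index, $\sum_{j,k}\tilde u_j\tilde u_k\,p(\|\mathbf{x}_j-\mathbf{x}_k\|_2^2)=p(0)$; hence the quadratic form with $\phi(\|\cdot\|_2)$ replaced by $p(\|\cdot\|_2^2)$ vanishes identically ($p(0)-2p(0)+p(0)=0$). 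Subtracting, the quadratic form equals the same expression applied to the residual $\phi(s)-p(s^2)$; by property (3) the only surviving terms involve arguments $s\le 2C_2h_{\mathbf{X},\Omega}$ (the factor $2$ coming from the triangle inequality for the pairs $(\mathbf{x}_j,\mathbf{x}_k)$), and property (2) bounds the total weight by $1+2C_1+C_1^2=(1+C_1)^2$. Your alternative route via the embedding of Proposition \ref{prop:pdofnl}, the entropy estimates, and the case split between $\nu<1$ and $\nu\ge 1$ all pertain to how the lemma is \emph{used} later in the paper, not to its proof.
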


\begin{proof}[Proof of Theorem \ref{coro:nlmatern}]
Define a map $h:V\rightarrow W$ between the function class $V$ and the set $W=\{\mathbf{a}=(a_1,...,a_n,...)^T:\sum_{j=1}^\infty \lambda_{\Phi,j}^{-1}a_j^2\leq 1\}\subset l_2(\mathbb{R}^\infty)$ as 
\begin{align*}
    h(g)=(\langle g,\phi_1\rangle_{L_2(\Omega)},...,\langle g,\phi_n\rangle_{L_2(\Omega)},...)^T.
\end{align*}
It can be verified that $\|g\|_{L_2(\Omega)}=\|h(g)\|_2$. Therefore, we can define a new positive definite function $K_1$ on $W$ which satisfies
\begin{align}
    K_1(\mathbf{a},\mathbf{a}')=\psi(\|\mathbf{a}-\mathbf{a}'\|_2)= \psi(\|g-g'\|_{L_2(\Omega)})=K(g,g'), \quad\forall g,g'\in V,
\end{align}
where $\mathbf{a}=h(g)$ and $\mathbf{a}'=h(g')$. Define $\mathbf{a}^{(j)}=h(g_j)$. For any $\mathbf{u}_n=(u_1,...,u_n)^T\in \mathbb{R}^n$, it follows that 
\begin{align}\label{eq:pfnlimspe1}
    & \mathbb{E}\left(f(g)-\sum_{j=1}^n u_j f(g_j)\right)^2\nonumber\\
    = & K(g,g)-2\sum_{j=1}^n u_j K(g,g_j)+ \sum_{j=1}^n \sum_{k=1}^nu_ju_k K(g_j,g_k)\nonumber\\
    = & \psi(0) - 2\sum_{j=1}^n u_j \psi(\|\mathbf{a}-\mathbf{a}^{(j)}\|_2)+ \sum_{j=1}^n \sum_{k=1}^nu_ju_l \psi(\|\mathbf{a}^{(j)}-\mathbf{a}^{(k)}\|_2).
\end{align}
Let $\mathbf{b}_j = (a_1^{(j)},...,a_m^{(j)})^T$, $\mathbf{b} = (a_1,...,a_m)^T$, $\mathbf{a}_c^{(j)} = (a_{m+1}^{(j)},a_{m+2}^{(j)},...)$, and $\mathbf{a}_c= (a_{m+1},a_{m+2},...)$, where $m$ will be determined later. Then $\mathbf{a}^{(j)} = (\mathbf{b}_j^T,(\mathbf{a}_c^{(j)})^T)^T$ and $\mathbf{a} = (\mathbf{b}^T,\mathbf{a}_c^T)^T$. Applying Lemma \ref{lem:lpr} to \eqref{eq:pfnlimspe1}, we obtain that for some $\tilde{u}_j$, 
\begin{align}\label{eq:pfthmnlut1}
    & \sum_{j=1}^n \tilde{u}_jp(\mathbf{b}_j) = p(\mathbf{b}), \mbox{ for all }p\in \pi_l(\mathbb{R}^m), 
    & \sum_{j=1}^n|\tilde{u}_j(\mathbf{b})|\leq C_1, \mbox{ and }
    & \tilde{u}_j(\mathbf{b})=0, \mbox{ if }\|\mathbf{b}-\mathbf{b}_j\|_2> C_2h_{\mathbf{B}_n,\Omega},
\end{align}
when $h_{\mathbf{B}_n,\Omega}\leq c_0$, where $\mathbf{B}_n=\{\mathbf{b}_1,...,\mathbf{b}_n\}$. Note that $C_2$ and $c_0$ depend on the interior cone condition and $l$. In particular, they change as the dimension of $\mathbf{b}$ and the degree $l$ of polynomials $p$ in \eqref{eq:pfthmnlut1} change. 

Since $\mathbf{a},\mathbf{a}^{(j)}\in W$, it follows that $\mathbf{a}\leq \sqrt{\lambda_{\Phi,j}}$ and $\mathbf{a}^{(j)}\leq \sqrt{\lambda_{\Phi,j}}$. 
Define a set $V_2=\bigtimes_{k=1}^m [0,\sqrt{\lambda_{\Phi,k}}]$. It can be verified that $\mathbf{b},\mathbf{b}_j\in V_2$. Set $\alpha = \pi /6$ and $\mathfrak{r}=\sqrt{\lambda_{\Phi,m}}/2$. It can be verified that the interior cone condition is satisfied. Then $C_2=48l^2$ and $c_0=\frac{\sqrt{\lambda_{\Phi,m}}}{96l^2}$. 

With $\tilde{u}_j$ defined in \eqref{eq:pfthmnlut1}, by \eqref{eq:pfnlimspe1}, it follows that
\begin{align}\label{eq:pflimspe2}
    & \mathbb{E}\left(f(g)-\hat f(g)\right)^2\nonumber\\
    \leq & \psi(0) - 2\sum_{j=1}^n \tilde{u}_j \psi(\|\mathbf{a}-\mathbf{a}^{(j)}\|_2)+ \sum_{j=1}^n \sum_{l=1}^n\tilde{u}_j\tilde{u}_l \psi(\|\mathbf{a}^{(j)}-\mathbf{a}^{(l)}\|_2)\nonumber\\
    = & \left(\psi(0) - 2\sum_{j=1}^n \tilde{u}_j \psi(\|\mathbf{b}-\mathbf{b}_j\|_2)+ \sum_{j=1}^n \sum_{k=1}^n\tilde{u}_j\tilde{u}_k \psi(\|\mathbf{b}_j-\mathbf{b}_k\|_2)\right)\nonumber\\
    & +\left(- 2\sum_{j=1}^n \tilde{u}_j \left(\psi(\|\mathbf{a}-\mathbf{a}^{(j)}\|_2)-\psi(\|\mathbf{b}-\mathbf{b}_j\|_2)\right)+ \sum_{j=1}^n \sum_{k=1}^n\tilde{u}_j\tilde{u}_k \left(\psi(\|\mathbf{a}^{(j)}-\mathbf{a}^{(l)}\|_2)-\psi(\|\mathbf{b}_j-\mathbf{b}_k\|_2)\right)\right)\nonumber\\
    := & I_1 + I_2. 
\end{align}
The first term can be bounded by Lemma \ref{lem:bofpow}, which gives  
\begin{align}\label{eq:pflimsI1}
    I_1 \leq &  9 \max_{0\leq s\leq 2C_2h_{\mathbf{B}_n,\Omega}} |\psi(s)-p(s^2)| =  9 \max_{0\leq s\leq \sqrt{\lambda_{\Phi,m}}} |\psi(s)-p(s^2)|,
\end{align}
for some $p\in \pi_{\lfloor l/2 \rfloor}(\mathbb{R})$, provided $h_{\mathbf{B}_n,\Omega}\leq c_0$. Since $\|\mathbf{b}_j-\mathbf{b}\|_2\leq \|\mathbf{a}-\mathbf{a}^{(j)}\|_2 = \|g-g_j\|_{L_2(\Omega)}$, we have $h_{\mathbf{B}_n,\Omega}\leq h_{G_n,V}$ so  $h_{\mathbf{B}_n,\Omega}\leq c_0$ holds.

Next, we consider bounding $I_1$ with a Mat\'ern kernel function $\psi$.
Lemma \ref{lemDecayEig} implies that $\lambda_{\Phi,j}\asymp j^{-\frac{2\nu_1}{d}}$. 
By the expansion of modified Bessel function \citep{suppabramowitz1948handbook}, $\psi$ can be written as
\begin{align*}
    \psi(r) = \sum_{k=0}^{\lfloor \nu \rfloor} c_k r^{2k} + c_\psi(r),
\end{align*}
where 
\begin{align*}
    c_\psi(r) = \left\{
    \begin{array}{cc}
        cr^{2\nu}\log r + O(r^{2\nu}) &  \nu = 1,2,...\\
        cr^{2\nu} + O(r^{2(\lfloor\nu\rfloor+1)}) & \mbox{ otherwise.}
    \end{array}
    \right.
\end{align*}
Therefore, we can take $p(s^2) = -\sum_{k=0}^{\lfloor \nu \rfloor} c_k s^{2k}$ and obtain that 
\begin{align}\label{eq:coronlmI1}
    \max_{0\leq s\leq \sqrt{\lambda_{\Phi,m}}}|\psi(s)-p(s^2)| \leq \left\{
    \begin{array}{ll}
        C_2\lambda_{\Phi,m}^{\nu}\log (\lambda_{\Phi,m}^{-1}) \leq C_3 m^{-\frac{2\nu\nu_1}{d}}\log m, &  \nu = 2,...\\
        C_4\lambda_{\Phi,m}^{\nu}\leq C_5 m^{-\frac{2\nu\nu_1}{d}}, & \mbox{otherwise.}
    \end{array}
    \right.
\end{align}
By \eqref{eq:pflimsI1}, 
\begin{align}\label{eq:pflimsI1X}
    I_1 \leq 9\max_{0\leq s\leq \sqrt{\lambda_{\Phi,m}}} |\psi(s)-p(s^2)|\leq \left\{
    \begin{array}{ll}
        C_6 m^{-\frac{2\nu\nu_1}{d}}\log m, &  \nu = 2,...\\
        C_7 m^{-\frac{2\nu\nu_1}{d}}, & \mbox{otherwise.}
    \end{array}
    \right.
\end{align}
It remains to bound $I_2$. For a Mat\'ern kernel function $\psi$, 
it can be verified that for all $s_1,s_2\in [0,s]$,
\begin{align*}
    |\psi(s_1)-\psi(s_2)|\leq C_8|s_1-s_2|^{2\tau}, 
\end{align*}
where $\tau=\min(\nu,1)$. Therefore, we can rewrite \eqref{eq:pflimspe2} as
\begin{align}\label{eq:coronlmI2}
    |I_2|\leq &  2C_8\sum_{j=1}^n \tilde{u}_j |\|\mathbf{a}-\mathbf{a}^{(j)}\|_2 - \|\mathbf{b}-\mathbf{b}_j\|_2|^{2\tau} + C_8\sum_{j=1}^n \sum_{k=1}^n\tilde{u}_j\tilde{u}_k |\|\mathbf{a}^{(j)}-\mathbf{a}^{(k)}\|_2-\|\mathbf{b}_j-\mathbf{b}_k\|_2|^{2\tau}\nonumber\\
    \leq & 2C_8\sum_{j=1}^n \left| \tilde{u}_j\right| \|\mathbf{a}-\mathbf{a}^{(j)} - (\mathbf{b}-\mathbf{b}_j)\|_2^{2\tau} + C_8\sum_{j=1}^n \sum_{k=1}^n\left|\tilde{u}_j\right|\left|\tilde{u}_l\right| \|\mathbf{a}^{(j)}-\mathbf{a}^{(l)} - (\mathbf{b}_j-\mathbf{b}_k)\|_2^{2\tau}\nonumber\\
    \leq & C_8(4C_1 + 2C_1^2)\lambda_{\Phi,m+1}^{\tau}\leq C_8(4C_1 + 2C_1^2)m^{-\frac{2\tau\nu_1}{d}}.
\end{align}
Because $m^{-\frac{2\nu\nu_1}{d}}\log m\leq m^{-\frac{2\nu_1\tau}{d}}\log m$ and $\log m>1$, combining \eqref{eq:coronlmI1}, \eqref{eq:coronlmI2} and \eqref{eq:pflimspe2} leads to 
\begin{align}\label{eq:thmnonlieqXX}
    \mathbb{E}\left(f(g)-\hat f(g)\right)^2\leq C_9m^{-\frac{2\nu_1\tau}{d}}\log m.
\end{align}
The last step is to compute $m$ such that there exist $n$ functions, $h_{G_n,V}\leq C_0m^{-\frac{2\nu_1}{d}}$. Since it is known that a unit ball of $\mathcal{N}_\Phi(\Omega)$ has a covering number
\begin{align*}
    N(\delta,V,\|\cdot\|_{L_\infty})\leq C_{10}\exp(C_{11}\delta^{-\frac{d}{\nu_1+d/2}}).
\end{align*}
Thus, in order to make $h_{G_n,V}\leq C_0m^{-\frac{2\nu_1}{d}}$, we set $\delta = C_0m^{-\frac{2\nu_1}{d}}$. Thus, as long as $n\geq C_{12}\exp(C_{13}m^{\frac{2\nu_1}{\nu_1+d/2}})$, $h_{G_n,V}\leq C_0m^{-\frac{2\nu_1}{d}}$ holds. This implies $m\leq C_{14}(\log n)^{\frac{\nu_1+d/2}{2\nu_1}}$. Since we require $\log m>1$, $n$ should satisfy $n>\exp((e/C_{14})^{\frac{2\nu_1}{\nu_1+d/2}})=: N_0$. Plugging $m\leq C_{14}(\log n)^{\frac{\nu_1+d/2}{2\nu_1}}$ in \eqref{eq:pflimsI1X}, \eqref{eq:coronlmI2}, and \eqref{eq:pflimspe2}, we finish the proof.
\end{proof}

\section{Sample path}\label{sec:samplepath}

\subsection{Linear kernel}\label{sec:samplepathlinear}

The focus of this section is to study how the unknown  parameters in the proposed linear kernel \eqref{eq:lk} affect the generated sample paths. We focus on the Mat\'ern kernel function, which has the form of \eqref{matern2}. 
Three types of unknown parameters are studied, including the $d$ positive diagonal elements of the diagonal matrix $\Theta$, denoted by $\boldsymbol{\theta}$, the positive scalar $\sigma^2$, and the smoothness parameter  $\nu$.

The sample paths are generated by the input functions  $g(x)=\sin(\alpha x)$, where $x\in \Omega=[0,2\pi]$ and $\alpha\in[0,1]$.
The value $\alpha$ indicates the frequency of the periodic function and the RKHS norm of $g$, i.e., $\|g\|_{\mathcal{N}_{\Psi}(\Omega)}$, increases monotonically  with respect to $\alpha$.
As a result, this input function creates an analogy to the sample paths in conventional GP by studying the paths as a function of $\alpha$ with different parameter settings.
In Figure \ref{fig:sample_path_linear}, the sample paths are demonstrated with different settings of the three types of parameters. The first row illustrates the sample paths with three different settings of $\nu$, given $\theta=1$ and $\sigma^2=1$. It appears that the smoothness of the resulting sample paths is not significantly affected by the setting of $\nu$, which typically controls the smoothness in conventional GPs.  
This is mainly because, unlike the conventional Mat\'ern kernel function \citep{stein2012interpolation}, the derivative of \eqref{eq:lk} with respect to the input $g$ is not directly related to the parameter $\nu$.
The middle panels in Figure \ref{fig:sample_path_linear} demonstrate the sample paths with different settings of $\theta$, given $\nu=2.5$ and $\sigma^2=1$. It shows that, as $\theta$ increases, the number of the local maxima and minima increases, which agrees with the observations in  conventional GPs. Lastly, the bottom three panels show the sample paths with different settings of $\sigma^2$, given $\nu=2.5$ and $\theta=1$. Similar to conventional GPs, $\sigma^2$ controls the amplitude of the resulting sample paths.

\subsection{Nonlinear kernel}\label{sec:samplepathnonlinear}

Based on the nonlinear kernel of \eqref{eq:nonlinearkernel} with the Mat\'ern kernel function defined in \eqref{matern1}, the sample paths are studied with respect to different settings of the $\gamma$, $\nu$ and $\sigma^2$.  As shown in Figure \ref{fig:sample_path_nonlinear}, the results appear to be consistent with the observations in conventional GPs where $\nu$ controls the smoothness of the function, $\theta$ controls the number of the local maxima and minima, and  $\sigma^2$ controls the amplitude of the functions.

\section{Supporting tables and figures in Sections 4 and 5}
The tables and figures that present the results in Sections 4 and 5 are provided in this section.

\begin{table}[h]
    \centering
    \begin{tabular}{C{1.2cm}|C{1.2cm}|C{1cm}|C{1cm}|C{1.1cm}|C{1.2cm}|C{1.5cm}|C{1.3cm}|C{2cm}}
     \toprule
     $g(\mathbf{x})$ & $x_1+x_2$ & $x_1^2$ &$x_2^2$&$1+x_1$&$1+x_2$&$1+x_1x_2$&$\sin(x_1)$&$\cos(x_1+x_2)$\\
     \midrule
     $f_1(g)$  & 1 & 0.33 & 0.33& 1.5& 1.5& 1.25& 0.46&0.50\\
     $f_2(g)$  & 1.5 & 0.14 & 0.14& 3.75& 3.75& 2.15& 0.18&0.26\\  
     $f_3(g)$  & 0.62 & 0.19 & 0.19& 0.49& 0.49& 0.84& 0.26&0.33\\     
    \bottomrule
    \end{tabular}
    \caption{Training data set for the numerical study, where $f_1(g)=\int_\Omega\int_\Omega g(\mathbf{x}){\rm{d}}x_1{\rm{d}}x_2$, $f_2(g)=\int_\Omega\int_\Omega g(\mathbf{x})^3{\rm{d}}x_1{\rm{d}}x_2$, and $f_3(g)=\int_\Omega\int_\Omega \sin(g(\mathbf{x})^2){\rm{d}}x_1{\rm{d}}x_2$.}
    \label{tab:linear_simulation}
\end{table}

\begin{table}[]
\begin{center}
\begin{tabular}{ c|c|c|c|c } 
 \toprule
Measurements & Method  & $f_1(g)=\int_\Omega\int_\Omega g$ & $f_2(g)=\int_\Omega\int_\Omega g^2$& $f_3(g)=\int_\Omega\int_\Omega \sin(g)$ \\ 
 \midrule
 \multirow{3}{*}{MSE} &  \texttt{FIGP}& $\boldsymbol{6.4\times 10^{-10}}$ & \textbf{0.012} & \textbf{0.016}\\ 
 & \texttt{FPCA} & $1.8\times 10^{-4}$ & 0.124 & 0.023\\ 
  & \texttt{T3} & 0.093 & 1.271 & 0.047\\ 
 \midrule
\multirow{3}{*}{Coverage (\%)} &  \texttt{FIGP}& \textbf{96.33} & 100 & \textbf{100}\\ 
 & \texttt{FPCA}& 100 & \textbf{92.33} & 76.00\\ 
  & \texttt{T3} & 100 & 98.33 & \textbf{100}\\ 
  \midrule
\multirow{3}{*}{Score} &  \texttt{FIGP}& \textbf{14.899} & \textbf{2.571} & \textbf{3.458}\\ 
 & \texttt{FPCA}& 6.631 & 1.207 & 0.290\\ 
   & \texttt{T3} & 1.064 & -1.364 & 2.047\\ 
 \bottomrule
\end{tabular}
\end{center}
    \caption{Prediction results of the FIGP and basis-expansion approach for the synthetic examples (\texttt{FPCA} indicates an FPCA expansion approach and \texttt{T3} indicates the Taylor series expansion of degree 3), including MSEs, average coverage rates of the 95\% prediction intervals, and the average proper scores, in which the values with better performances are boldfaced.}
    \label{tab:prediction_comparison}
\end{table}

\newpage

\begin{figure}[h!]
    \centering
    \includegraphics[width=0.95\textwidth]{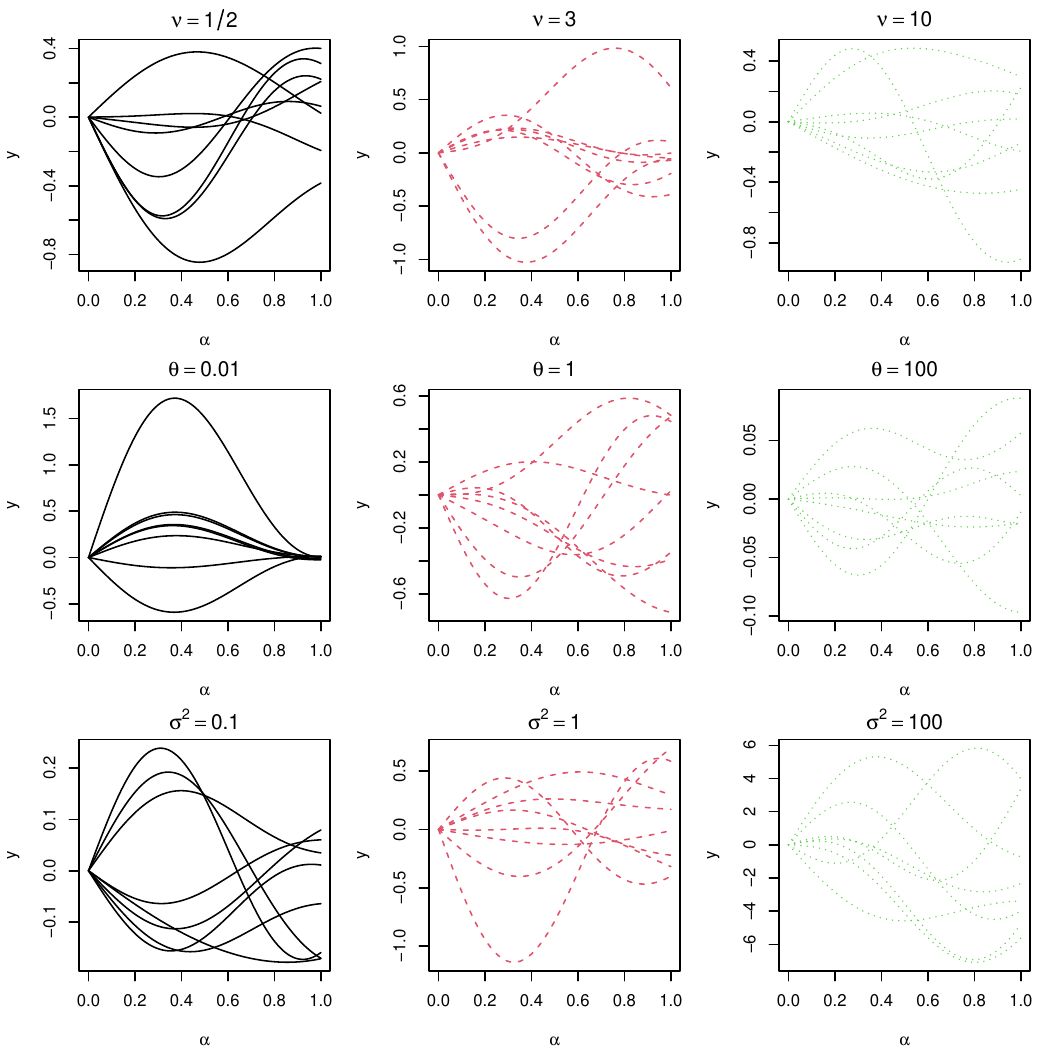}
    \caption{Sample paths of linear kernels. Top panel shows the effect of varying the parameter $\nu$ with the fixed $\theta=1$ and $\sigma^2=1$, middle panel shows the effect of varying the parameter $\theta$ with the fixed $\nu=2.5$ and $\sigma^2=1$, and the bottom panel shows the effect of varying the parameter $\sigma^2$ with the fixed $\nu=2.5$ and $\theta=1$.}
    \label{fig:sample_path_linear}
\end{figure}

\begin{figure}[h!]
    \centering
    \includegraphics[width=0.95\textwidth]{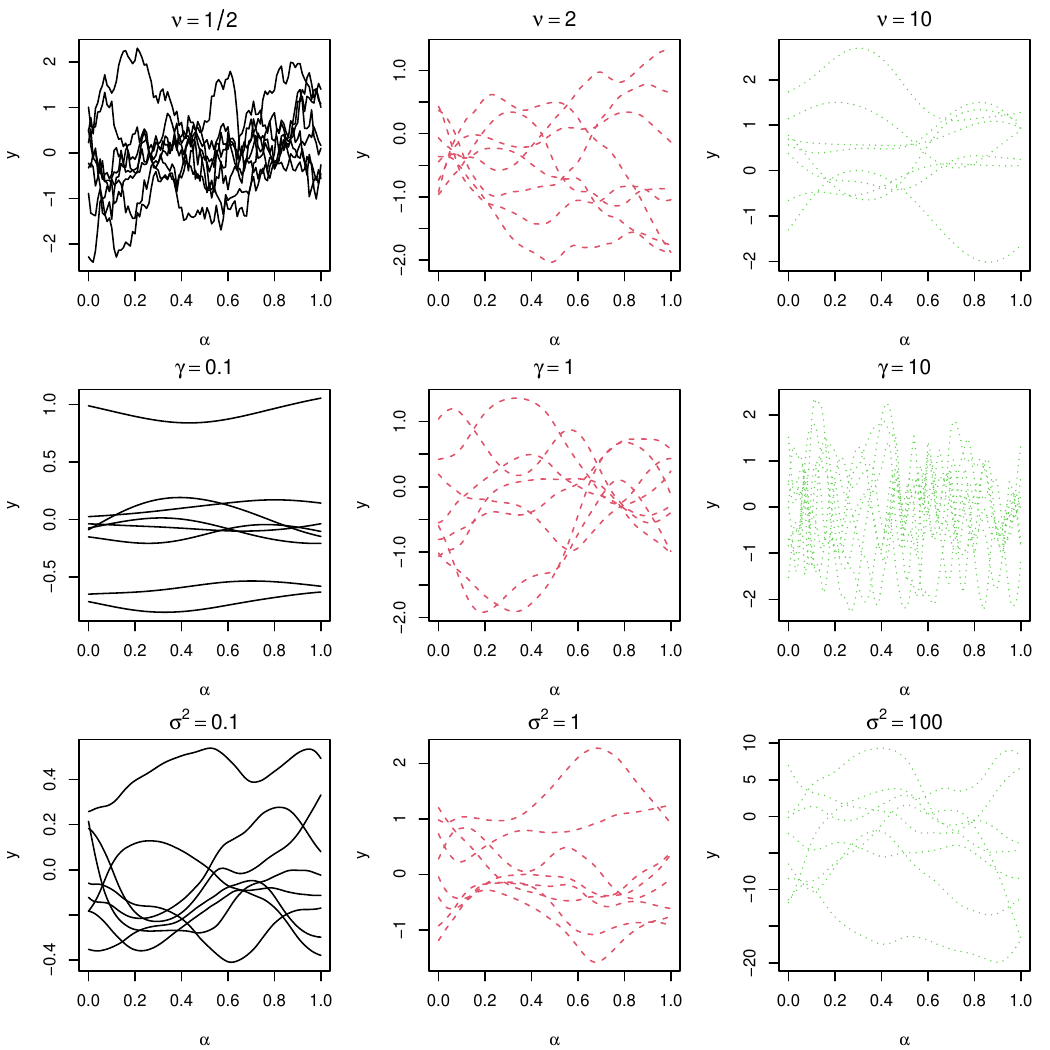}
    \caption{Sample paths of nonlinear kernels. Top panel shows the effect of varying the parameter $\nu$ with the fixed $\gamma=1$ and $\sigma^2=1$, middle panel shows the effect of varying the parameter $\gamma$ with the fixed $\nu=2.5$ and $\sigma^2=1$, and the bottom panel shows the effect of varying the parameter $\sigma^2$ with the fixed $\nu=2.5$ and $\gamma=1$.}
    \label{fig:sample_path_nonlinear}
\end{figure}

\begin{figure}[]
    \centering
    \includegraphics[width=0.7\textwidth]{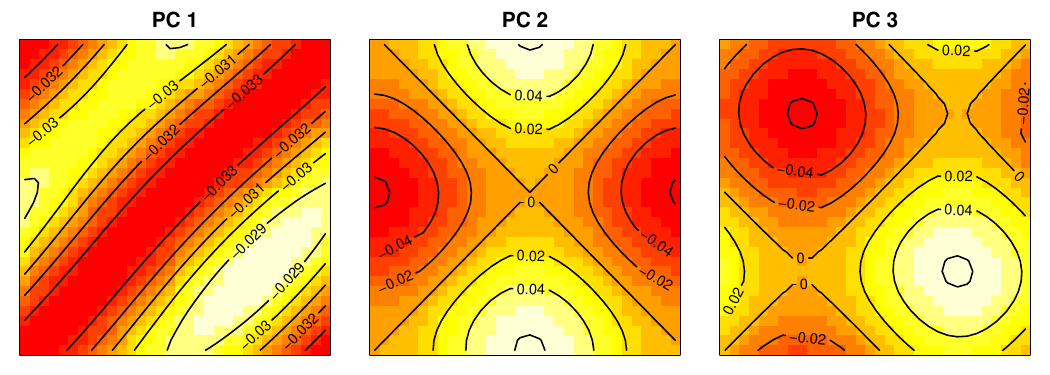}
    \caption{Principle components, which explain more than 99.99\% variations of the data.}
    \label{fig:realcase_pc}
\end{figure}

\begin{figure}[]
    \centering
    \includegraphics[width=0.8\textwidth]{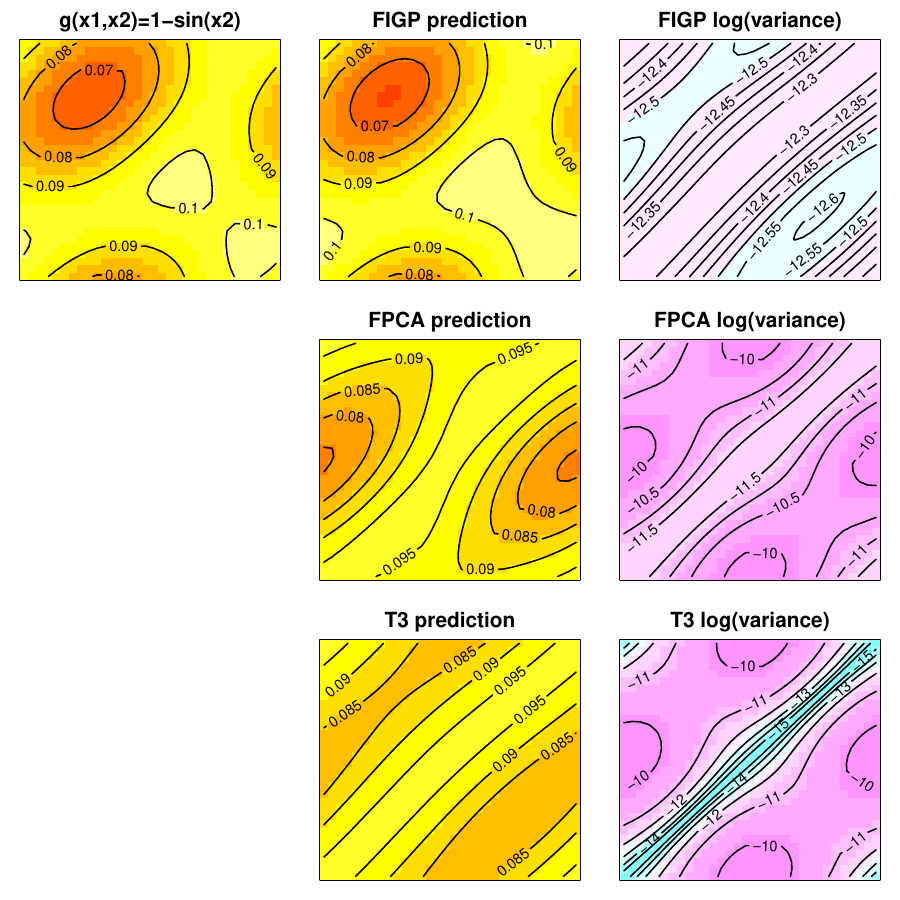}
    \caption{Prediction on the validation function. The left panel is the true output of the functional input  $g(x_1,x_2)=1-\sin(x_2)$, and the middle panels are the predictions of \texttt{FIGP}, \texttt{FPCA}, and \texttt{T3}, and the right panels are their variances in logarithm.}
    \label{fig:realcase_prediction}
\end{figure}

\end{document}